\newtheorem{theorem}{Theorem}
\newtheorem{lemma}{Lemma}
\newcommand{\A}[1]{\hat{a}_{#1}}
\newcommand{\Ad}[1]{\hat{a}^\dagger_{#1}}
\newcommand{\B}[1]{\hat{b}_{#1}}
\newcommand{\Bd}[1]{\hat{b}^\dagger_{#1}}
\newcommand{\D}[2][]{\mathcal{\hat{D}}_{#1}\pqty{#2}}
\newcommand{\Dd}[2][]{\mathcal{\hat{D}}^\dagger_{#1}\pqty{#2}}
\newcommand{\oo}{\mathrm{o}}
\newcommand{\out}{\mathrm{out}}
\newcommand{\iin}{\mathrm{in}}
\newcommand{\ee}{\mathrm{e}}
\newcommand{\mm}{\mathrm{m}}
\newcommand{\Ain}[1]{a_{#1,\mathrm{in}}}
\newcommand{\Adin}[1]{a^\dagger_{#1,\mathrm{in}}}
\newcommand{\w}{\omega}
\newcommand{\Ab}{\mathbf{A}}
\newcommand{\Bb}{\mathbf{B}}
\newcommand{\AD}{\mathbf{A_\mathrm{d}}}
\newcommand{\Am}{\mathbf{A_-}}
\newcommand{\Ap}{\mathbf{A_+}}
\newcommand{\Apm}{\mathbf{A_\pm}}
\newcommand{\Amp}{\mathbf{A_\mp}}
\newcommand{\Xm}{\mathbf{\Xi}_-}
\newcommand{\Xp}{\mathbf{\Xi}_+}
\newcommand{\XI}[2]{\mathbf{\Xi}^{\bqty{#1}}_{#2}}
\newcommand{\Xpm}{\mathbf{\Xi}_\pm}
\newcommand{\II}{\mathbf{I}}
\newcommand{\TT}{\mathbf{T}}
\newcommand{\XX}[2]{\mathbf{X}^{\bqty{#1}}_{#2}}
\newcommand{\diag}{\mathrm{diag}}
\begin{document}
\title{Parametric Amplification of an Optomechanical Quantum Interconnect}
\author{Huo Chen}
\email{huochen@lbl.gov}
\affiliation{Applied Mathematics and Computational Research Division, Lawrence Berkeley National Laboratory, Berkeley, CA 94720, USA}

\author{Marti Vives}
\email{mvives1122@gmail.com}
\affiliation{Department of Electrical and Computer Engineering, Princeton University, Princeton, NJ 08544, USA}

\author{Mekena Metcalf}
\email{mekena.metcalf@gmail.com}
\affiliation{Applied Mathematics and Computational Research Division, Lawrence Berkeley National Laboratory, Berkeley, CA 94720, USA}

\begin{abstract}
Connecting superconducting qubits to optical fiber necessitates the conversion of microwave photons to optical photons.
Modern experimental demonstrations exhibit strong coupling between a microwave resonator and an optical cavity mediated through phononic modes in a mechanical oscillator. 
This paradigmatic transduction experiment is bounded by a theoretical efficiency with constant driving amplitudes on the electromagnetic resonators.
By adding a parametric drive to the microwave resonator and optical cavity we discover the converted signal through the quantum transducer is amplified, while maintaining a lower level of the added noise.
We propose a theoretical framework for time-dependent control of the driving lasers based on the input-output formalism of quantum optics, and solve analytically the %quantum Langevin equation to obtain an analytic expression for the transfer function. 
transduction efficiency and added noise when the control signals parametrically drive the system. Our results show better transduction efficiency and lower added noise in varying parameter regimes relevant to current transduction experiments.

\end{abstract}

\maketitle

\section{Introduction}\label{sec:intro}
Quantum interconnect, a device which serves as
a coherent interface between otherwise incompatible physical media, is a critical component for the quantum network~\cite{QuantumIC_2021,cirac1997quantum, Mirhosseini2020-dq, Lauk2020-sx}, and viewed by many as essential for any hybrid architecture that may illuminate a viable path to scalable quantum computers~\cite{Arute2019-fz, Zhong2020-sq, Figgatt2019-wt}. A quantum transducer is an  interconnect device used to connect qubits at disparate energy scales. Experimental demonstrations based on optomechanical systems validate the promise of quantum transduction, yet further progress is needed to achieve unity efficiency without introducing excessive noise~\cite{andrews2014bidirectional, higginbotham2018harnessing, aspelmeyer2014cavity,Zeuthen2020-sr}. A major challenge is that an ideal quantum transducer requires both lossless cavities and the impedance-matching condition~\cite{Safavi-Naeini2011-ox}, which are experimentally demanding.

To overcome the aforementioned challenge, methods based on teleportation~\cite{Barzanjeh2011entangling,Zhong2020-iy,Wu2021-ry}, single-mode squeezing~\cite{Lau2020-si,zhong2022quantum}, adaptive control~\cite{Zhang2018-yk} and interference and phase-sensitive amplification~\cite{Lau2019-tc} have been proposed. However, all of the above approaches demand additional resources such as classical channels, new experimental components or squeezed input states. 

In this paper, we formulate a control scheme to improve the transducer performance that, unlike other approaches, does not require any hardware redesign. It needs only the ability to modulate the pump lasers on the electromagnetic (EM) resonators at twice the mechanical frequency. Such modulation is known to generate two-mode squeezing between the cavity photon and the mechanical phonon and enhance their entanglement~\cite{Mari2009-jx}. We will call our control protocol as parametric driving (PD) throughout this paper. Note the technique of parametrically driving a quantum transducer has been discussed in the context of single-mode squeezing~\cite{Lau2020-si,zhong2022quantum}. However, additional components, such as the Josephson parametric amplifier, are needed to implement the singe-mode squeezing operators.

We demonstrate the efficacy of our protocol by providing an analytic solution to the transfer matrix. Such solution is obtained by solving the quantum Langevin equation~\cite{aspelmeyer2014cavity,Lecocq2016-rs,Tian2010}, and is only accessible in special cases with time-dependent drive signals. In the case of PD \textit{we derive an analytic solution for the steady state transfer matrix} and discover that PD, compared to the standard constant control protocol, leads to an amplification of transduction efficiency while maintaining a lower level of added noise. The low added noise is achieved through the suppression of the two-mode-squeezing interaction (TMSI), similar to Ref.~\cite{Lau2020-si}. This work and Ref.~\cite{zhong2022quantum} are the first to report PD induced transduction efficiency enhancement.

Additionally, we use our theoretical tools to analyze an existing experimental implementation and readily show an improvement over constant cavity driving. Hence our PD strategy is a valuable addition, and complements the existing protocols and error suppression techniques~\cite{higginbotham2018harnessing, Zhang2018-yk,Lau2020-si} to further improve transduction performance.

The structure of this paper is as follows. In Sec.~\ref{sec:time_dependent_opto} we define the optomechanical transducer Hamiltonian and derive a general framework to include time-dependent controls based on the input-output theory of quantum optics. Next we apply this general theoretical framework to the parametrically driven quantum transducer. In this case, the transfer matrices are analytically solvable in the frequency domain. We present our main results regarding the properties of the transfer matrix in Sec.~\ref{sec:param_driven_sys}, and provide the corresponding proofs in Sec.~\ref{sec:proof}. Then based on the analytical solution, we examine the transduction efficiency and added noise of an ideal parametrically driven quantum transducer in Sec.~\ref{sec:ideal_transducer}, and a realistic one in Sec.~\ref{sec:realistic_transducer}. We show that, in both of the aforementioned cases, PD improves the transduction efficiency without amplifying the vacuum noise. We conclude in Sec.~\ref{sec:conclusion}, and present additional technical details in the Appendices.
\section{Time-Dependent Optomechanics}
\label{sec:time_dependent_opto} 
We start by formulating the Hamiltonian of the optomechanical transducer illustrated in Fig~\ref{fig:schematic}. We assume the two strong pump tones applied to both EM resonators have time-dependent amplitudes $\Omega_i\pqty{t}$ which serve as control signals. The Hamiltonian (in the laser frame of the pump tones) is
\begin{align}
    \label{eq:H}
    \hat{H}\pqty{t} &=  \sum_{i\in \Bqty{\ee,\oo}} \Delta_i \hat{a}_i^{\dagger}\hat{a}_i + g_i\hat{x}_\mm \hat{a}_i^{\dagger}\hat{a}_i + \Omega_i\pqty{t}\Ad{i} + \Omega^*_i\pqty{t}\A{i} \notag \\
    & \quad+ w_\mm \Ad{\mm}\A{\mm} \ .    
\end{align}
where $\A{i}$, $\Ad{i}$ are the annihilation and creation operators, and $\hat{x}_i = \Ad{i}+\A{i}$. We also set $\hbar = 1$. The subscript $\oo$ (optical) and $\ee$ (electrical) stand for the EM modes at optical and microwave frequencies while $\mm$ (mechanical) stands for the vibrational mode. We will call all of them cavities henceforth. $\Delta_i=\omega_i-\omega_{il}$ is the cavity ($\omega_i$) detuning with respect to the laser frequencies ($\omega_{il}$), and $g_\oo$ and $g_\ee$ are the optomechanical single-photon coupling strengths between the EM and mechanical cavities. 

\begin{figure}
    \centering
    \includegraphics[width=0.35\textwidth]{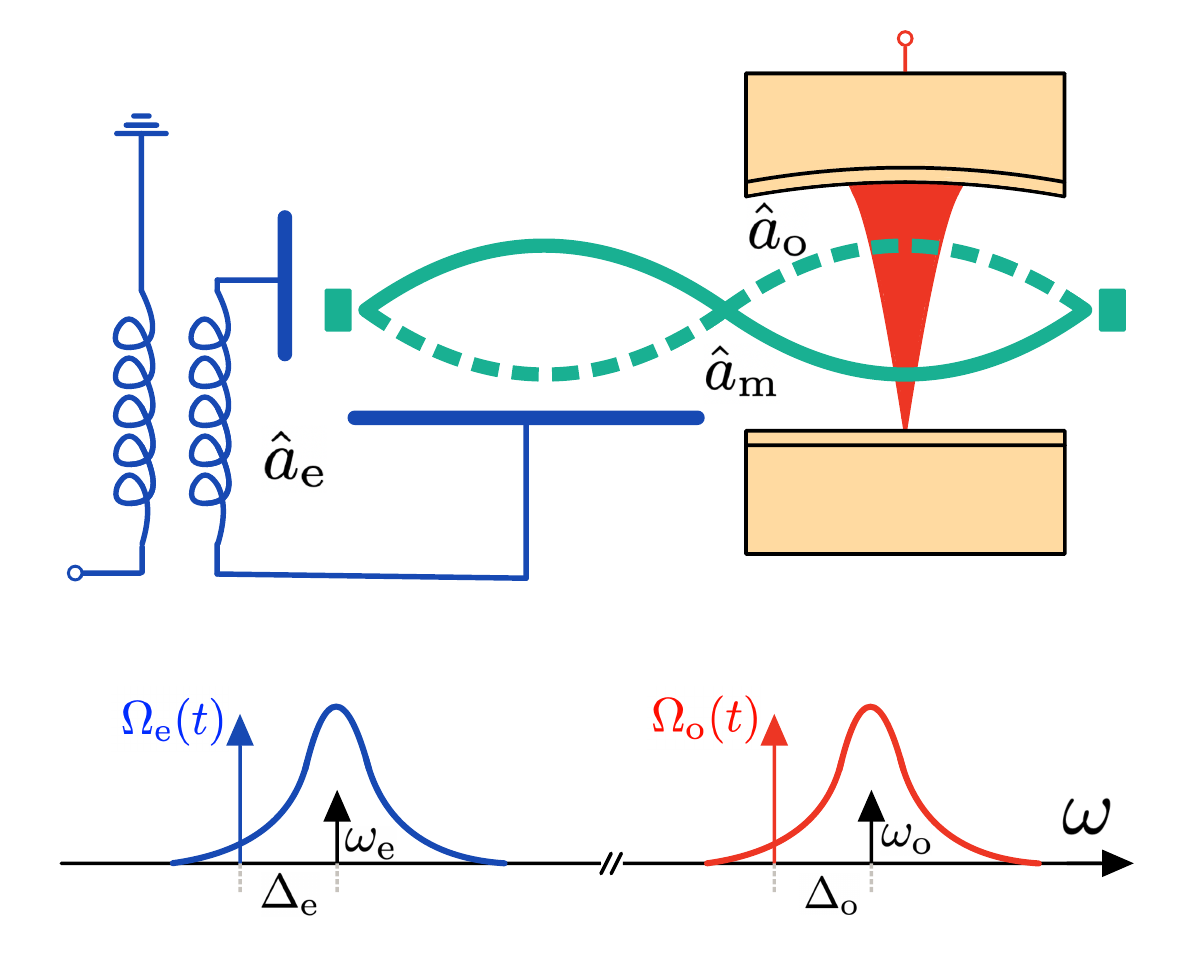}
    \caption{Schematic of an optomechanical quantum transducer. In this setup, an optical cavity ($\A{\oo}$) and a microwave resonator ($\A{\ee}$) couple to the same mechanical membrane ($\A{\mm}$). The control signals ($\Omega_i\pqty{t}$) are applied to both of the electromagnetic modes via pump lasers which are red detuned (by $\Delta_i$) from the corresponding cavity frequencies ($\omega_i$).}
    \label{fig:schematic}
\end{figure}

We derive an effective Hamiltonian with time-dependent controls following~\cite{machnes2012pulsed} (see Appendices~\ref{app:interaction_hamiltonian} and~\ref{app:geometric_D}). At the core of this approach is a perturbation theory that treats each EM cavity semiclassically and works with the perturbation operators $\delta\A{i} = \A{i}-\alpha_i\pqty{t}$ around the average values $\alpha_i(t)=\expval{\A{i}(t)}$ which are determined by
\begin{equation}
    \label{eq:a_eq_disspation}
    \dot{\alpha}_i(t) = -i\Delta_i\alpha_i(t)-\kappa_i\alpha_i(t)/2-i\Omega_i(t), \quad i\in\Bqty{\ee,\oo}\ ,
\end{equation}
where $\kappa_i$ are the total energy decay rates of cavity $i$.
This is permissible because the pump lasers are strong enough to dominate the cavity dynamics, i.e., $\vert\Omega_i(t)\vert \gg \Delta_i,\ g_i$. 
Consequently the perturbation operators are still small while $\Omega_i\pqty{t}$ take large values. 
The proturbation procedure consists of first rotating the Hamiltonian Eq.~\eqref{eq:H} w.r.t. the displacement operator $\mathcal{\hat{D}} = \otimes_{i\in\Bqty{\ee,\oo}}\D[i]{\alpha_i(t)}$, where $\D[i]{\alpha_i(t)}=\exp{\alpha_i(t)\Ad{i}-\alpha^*_i(t)\A{i}}$, and then ignoring the second order term proportional to $\delta \Ad{i} \delta \A{i}$. The later step is also known as linearization in the literature. For simplicity, we will use the notation $\A{i}$ for $\delta \A{i}$ from this point forward. The linearized Hamiltonian is
\begin{subequations}\label{eq:H_lin}
\begin{align}
    \hat{H}_{\mathrm{lin}}\pqty{t} &= \omega_\mm \Ad{\mm} \A{\mm} + \sum_{i\in\Bqty{\ee,\oo}} \Delta_i \Ad{i} \A{i} + g_i|\alpha_i(t)|^2 \hat{x}_\mm\\
    &+(G^*_i(t) \A{i} + G_i(t) \Ad{i}) \hat{x}_\mm \label{eq:H_inter} \ ,
\end{align}
\end{subequations}
where $G_i(t) \equiv g_i\alpha_i\pqty{t}$ are the effective EM-mechanical coupling strengths, which can be controlled by $\Omega_i\pqty{t}$. The purpose of pump lasers is to boost  the EM-mechanical coupling strengths by factors of $\alpha_i\pqty{t}$, so they are strong enough for signal transduction.

Based on the input-output formulation of cavities~\cite{Gardiner1985-dl,Walls2008-ks}, the quantum Langevin equation is derived by plugging Eq.~\eqref{eq:H_lin} into
\begin{equation}
    \label{eq:multi_input}
    \dot{\hat{a}}_i=i\comm{\hat{H}_\mathrm{lin}}{\A{i}} - \frac{\kappa_i}{2}\A{i}+\sum_j\sqrt{\kappa_{ij}}\A{ij,\mathrm{in}},\  i\in\Bqty{\ee,\mm,\oo} \ ,
\end{equation}
where $\A{ij,\mathrm{in}}$ is the $j$th input field operator (also referred to as the input) on cavity $i$, and $\kappa_{ij}$ is the corresponding coupling strength (see Appendix~\ref{app:input_output} for details). We also omitted the time-dependence of the operators. The input field operators directly model the photons injected from any coupling port
(such as the input mirror) into the cavity.
A compact form of Eq.~\eqref{eq:multi_input} is given by (see Appendices~\ref{app:input_output} and~\ref{app:eom} for details)
\begin{equation}
    \label{eq:eom_matrix_form}
    \mathbf{\dot{a}}(t) = \Ab(t)\mathbf{a}(t) + \Bb\mathbf{a}_{\mathrm{in}}\pqty{t} + \mathbf{v}\pqty{t} \ ,
\end{equation}
where
$
    \mathbf{a}\pqty{t}\equiv\bqty{\A{\oo}\pqty{t}, \A{\ee}\pqty{t}, \Ad{\oo}\pqty{t}, \Ad{\ee}\pqty{t}, \A{\mm}\pqty{t}, \Ad{\mm}\pqty{t}}^\mathrm{T}
$
%$
%    \mathbf{a}\equiv\bqty{\A{\oo}, \A{\ee}, \Ad{\oo}, \Ad{\ee}, \A{\mm}, \Ad{\mm}}^\mathrm{T}
%$
is the state vector consisting of all the Heisenberg picture field operators, and $\mathbf{a}_{\mathrm{in}}\pqty{t}$ is the input vector consisting of all the inputs. The explicit expressions of $\Ab$, $\Bb$ and $\mathbf{v}\pqty{t}$ are given in Eqs.~\eqref{eq:ABv}.

In the main text, we focus on the following input vector
\begin{align}
    \label{eq:input_vector}
    \mathbf{a}_{\mathrm{in}} &\equiv \Big[ \A{\oo,\iin},\A{\oo \mathrm{I},\iin}, \A{\ee,\iin},\A{\ee \mathrm{I},\iin}, \Ad{\oo, \iin},\Ad{\oo\mathrm{I}, \iin}, \Ad{\ee,\iin},\Ad{\ee\mathrm{I},\iin},\notag\\
    &\quad\A{\mm,\iin}, \Ad{\mm,\iin}\Big]^{\mathrm{T}} \ ,
\end{align}
where $\A{i,\iin}$ ($i\in\Bqty{\ee,\oo}$) are the external inputs which model the incoming signal photons on cavity $i$, $\A{i\mathrm{I},\iin}$ and $\A{\mm,\iin}$ are the internal inputs which model the incident noise on cavities $i$ and $\mm$. We use the notation $\kappa_{i,\mathrm{ex}}$ for the coupling strengths to the external inputs. We also assume all the inputs are vacuum noise unless otherwise specified. Before proceeding, we emphasize that our results are true for an arbitrary configuration of inputs. 

The output vector can be defined similarly, and is connected to the input vector via
\begin{equation}
    \label{eq:input-output}
    \mathbf{a}_{\out}\pqty{t} + \mathbf{a}_{\iin}\pqty{t} = \Bb^\mathrm{T} \mathbf{a}\pqty{t} \ .
\end{equation}
The goal of this quantum transducer is to convert one photon from the external input of the electrical resonator to the external output of the optical resonator, or vice versa.
\begin{widetext}
    \begin{subequations}
        \label{eq:ABv}
        \begin{align}
        \Ab(t) &=  
            \begin{pmatrix}
                -i\Delta_\oo - \frac{\kappa_\oo}{2} & 0 & 0 & 0&-iG_\oo\pqty{t}&-iG_\oo\pqty{t}  \\
                0 & -i \Delta_\ee - \frac{\kappa_\ee}{2} & 0 & 0&-i G_\ee\pqty{t}&-i G_\ee\pqty{t}  \\
                0&0&i\Delta_\oo-\frac{\kappa_\oo}{2}&0&iG^*_\oo\pqty{t}&iG^*_\oo\pqty{t}\\
                0&0&0&i\Delta_e-\frac{\kappa_\ee}{2}&iG^*_\ee\pqty{t}&iG^*_\ee\pqty{t}  \\
                -iG^*_\oo\pqty{t} & -i G^*_\ee\pqty{t} & -iG_\oo\pqty{t} &-iG_\ee\pqty{t}&-i\omega_\mm-\frac{\kappa_\mm}{2}& 0  \\
                iG^*_\oo\pqty{t} & i G^*_\ee\pqty{t} & iG_\oo\pqty{t} &iG_\ee\pqty{t}&0& i\omega_\mm-\frac{\kappa_\mm}{2} 
            \end{pmatrix} \label{eq:At}\\
            \Bb &=\begin{pmatrix}
                \mathbf{D}& 0 & 0 \\
                0 & \mathbf{D} & 0\\
                0 & 0 & \mathbf{M}\\
            \end{pmatrix}, \quad 
            \mathbf{D} =\begin{pmatrix}
                \sqrt{\kappa_{\oo 1}} &\cdots& \sqrt{\kappa_{\oo k_{\oo}}} & 0 &\cdots & 0  \\
                0 & \cdots & 0 & \sqrt{\kappa_{\ee 1}} &\cdots& \sqrt{\kappa_{\ee k_{\ee}} }
            \end{pmatrix} \label{eq:Bmat}\\
                \mathbf{M} &=\begin{pmatrix}
                \sqrt{\kappa_{\mm 1}} &\cdots& \sqrt{\kappa_{\mm k_{\mm}}} & 0 &\cdots & 0  \\
                0 & \cdots & 0 & \sqrt{\kappa_{\mm 1}} &\cdots& \sqrt{\kappa_{\mm k_{\mm}} }
            \end{pmatrix}\\
            \mathbf{v}\pqty{t} &= \bigg[0,
            0,0,
            0,-ig_\oo\abs{\alpha_\oo\pqty{t}}^2 - ig_\oo\abs{\alpha_\ee\pqty{t}}^2,ig_\oo\abs{\alpha_\oo\pqty{t}}^2 + ig_\ee\abs{\alpha_\ee\pqty{t}}^2
            \bigg]^\mathrm{T}\ .\label{eq:v}
        \end{align}
        \end{subequations}
\end{widetext}

\section{Parametrically Driven System}
\label{sec:param_driven_sys}
We can improve upon the standard practice of using constant control signals ($\Omega_i\pqty{t}=\Omega_i$) by choosing oscillating control signals at twice the input signal frequency $ \Omega_i\pqty{t} = \Omega_i e^{-2i\omega_\mm t}$. In this paper we consider only the case of symmetric driving amplitudes $\Omega_i=\Omega$.
%The scheme is inspired by parametric amplification~\cite{Mari2009-jx,Clerk2010-vx}. 
We provide the main results here and a detailed derivation is found in Sec.~\ref{sec:proof}.

In the long time limit, the steady state solution of Eq.~\eqref{eq:a_eq_disspation} subject to the oscillating control has only a single-frequency component
\begin{equation}
    \label{eq:steady_alpha}
    \alpha^\mathrm{s}_i\pqty{t} = \frac{2\Omega_i}{4\omega_\mm-2\Delta_i+i\kappa_i}e^{-2i\omega_\mm t} \equiv \Omega^{\mathrm{s}}_ie^{-2i\omega_\mm t}\ .
\end{equation}
As a result, the steady state $\mathbf{A}(t)$ can be decomposed into a Fourier series
\begin{equation}
    \label{eq:AF}
    \Ab\pqty{t} = \AD + \Am e^{-2i\omega_\mm t} + \Ap e^{2i\omega_\mm t} \ ,
\end{equation}
where $\AD$, $\Am$ and $\Ap$ are constant matrices given by $\AD = \mathrm{diag}\pqty{\Ab\pqty{t}}$ and
\begin{equation}\label{eq:Apm}
    \Am =     \begin{pmatrix}
        0 & 0 & \mathbf{Q}_{\mathrm{am}}   \\
        0 & 0 & 0   \\
        0 & \mathbf{Q}_\mathrm{mc} & 0
    \end{pmatrix},\  \Ap =  
    \begin{pmatrix}
        0 & 0 & 0  \\
        0 & 0 & \mathbf{Q}_\mathrm{cm}  \\
        \mathbf{Q}_\mathrm{ma} & 0 & 0 
    \end{pmatrix} \ ,
\end{equation}
where 
\begin{equation}
    \mathbf{Q}_{\mathrm{am}} = \begin{pmatrix}
        -iG^\mathrm{s}_\oo & -iG^\mathrm{s}_\oo \\
        -iG^\mathrm{s}_\ee & -iG^\mathrm{s}_\ee 
    \end{pmatrix},\  \mathbf{Q}_\mathrm{mc} =  
    \begin{pmatrix}
        -iG^\mathrm{s}_\oo & -iG^\mathrm{s}_\ee \\
        iG^\mathrm{s}_\oo & iG^\mathrm{s}_\ee
    \end{pmatrix} \ ,
\end{equation}
and $\mathbf{Q}_{\mathrm{cm}}=\mathbf{Q}^*_{\mathrm{am}}$, $\mathbf{Q}_{\mathrm{ma}}=-\mathbf{Q}^*_{\mathrm{mc}}$, $G^\mathrm{s}_i=g_i \Omega^\mathrm{s}_i$.
It is convenient to examine the equation of motion in Fourier space by taking the Fourier transform on both sides of Eq.~\eqref{eq:eom_matrix_form}~\cite{Fourier}
\begin{align}
    \label{eq:shifted_eoq}
    -i\omega \mathbf{a}\pqty{\omega} &= \AD\mathbf{a}\pqty{\omega}+\Am\mathbf{a}\pqty{\omega-2\omega_\mm} \notag\\
    &\quad+\Ap\mathbf{a}\pqty{\omega+2\omega_\mm} + \Bb\mathbf{a}_{\mathrm{in}}\pqty{\omega} + \mathbf{v}\pqty{\omega} \ .
\end{align}
We observe that the PD introduces sidebands shifted by $\pm2\omega_\mm$. While solving Eq.~\eqref{eq:shifted_eoq} is difficult, we can proceed with a few approximations. First, we ignore the vector $\mathbf{v}\pqty{\omega}$ because its steady state only contributes to the direct current component. Second, we assume the number of sidebands generated by Eq.~\eqref{eq:shifted_eoq} can be truncated. Keeping the nearest $2N$ sidebands, 
Eq.~\eqref{eq:shifted_eoq} can be rewritten as
\begin{equation}
    \label{eq:expanded_eom}
     -i\mathbf{\bar{W}}\mathbf{\bar{a}}\pqty{\omega}=\mathbf{\bar{A}}\mathbf{\bar{a}}\pqty{\omega}+\mathbf{\bar{B}}\mathbf{\bar{a}}_{\mathrm{in}}\pqty{\omega}
\end{equation}
with a sidebands extended state vector $\mathbf{\bar{a}}\pqty{\omega}\equiv\bqty{\mathbf{a}^T\pqty{\omega-2N\omega_\mm},\dots\mathbf{a}^T\pqty{\omega},\dots\mathbf{a}^T\pqty{\omega+2N\omega_\mm}}^T$ and a similarly defined $\mathbf{\bar{a}}_{\mathrm{in}}\pqty{\omega}$. Additionally, $\mathbf{\bar{W}}$ is a diagonal matrix with the corresponding shifted frequencies and other quantities are
\begin{equation}
    \label{eq:Abar}
    \mathbf{\bar{A}}=\begin{pmatrix}
    \AD&\Ap&0&\dots&0\\
    \Am&\AD&\Ap&&0\\
    0&\Am&\AD&&\vdots\\
    \vdots&&&\ddots&\Ap\\
    0&0&\cdots&\Am&\AD
    \end{pmatrix} \ ,
\end{equation}
and $ \mathbf{\bar{B}}=\mathrm{diag}\Big(\Bb,\ \Bb, \dots,\ \Bb \Big)$.
%\begin{align}
%    \bar{\omega}&=\mathrm{diag}\Big(\pqty{\omega-2N\omega_\mm}I,\ \pqty{\omega-2\pqty{N-1}\omega_\mm}I \notag\\
%    &\quad\quad\quad \dots \pqty{\omega+2N\omega_\mm}I \Big) \ .\label{eq:omega_matrix}
%\end{align}
%In Eq.~\eqref{eq:omega_matrix}, $I$ represents the identity matrix of the same dimension as $\Ab\pqty{t}$. 

Using Eqs.~\eqref{eq:expanded_eom} and~\eqref{eq:input-output} we can solve for the output vector in terms of the input vector. The solution, without of loss of generality, can be written in the scattering form
\begin{equation}
    \label{eq:general_transfer_T}
    \mathbf{a}_{\mathrm{out}}\pqty{\w}=\TT\pqty{\w}\mathbf{a}_{\mathrm{in}}\pqty{\w} + \sum_{i=1}^{N}\TT_\pm^{\bqty{k}}\pqty{\w}\mathbf{a}_{\iin}\pqty{\w\pm 2k\w_\mm} \ ,
\end{equation}
where $\TT\pqty{\w}$ and $\TT_\pm^{\bqty{k}}\pqty{\w}$ are known as the transfer matrices and their explicit forms are given in Sec.~\ref{sec:proof}. Eq.~\eqref{eq:general_transfer_T} suggests that inputs at other sidebands are mixed into the outputs via $\TT_\pm^{\bqty{k}}\pqty{\w}$. We use the scattering notation $ \A{i,\out}\pqty{\w} = \TT_{i,j'}\pqty{\w}\Ad{j,\iin}\pqty{\w}$ to denote one element of the transfer matrix where the prime symbol in the subscript means the corresponding element is an annihilation operator.

Given the general form of Eq.~\eqref{eq:general_transfer_T}, we present the two main results of this paper. The first one states that $\TT\pqty{\w}$ has a block diagonal structure
$
    \TT\pqty{\w} = \mathrm{diag}\bqty{\TT_\mathrm{a}\pqty{\w}, \TT_\mathrm{c}\pqty{\w},\TT_\mathrm{m}\pqty{\w}}
$
where $\TT_\mathrm{a}\pqty{\w}$ and $\TT_\mathrm{c}\pqty{\w}$ ($\TT_\mm\pqty{\w}$) are $4\times4$ ($2\times 2$) matrices. It has two implications. First, there are no conjugate transmissions in $\TT\pqty{\w}$, i.e., $\TT_{i,j'}\pqty{\w} = 0$ for any $i$ and $j$. This statement indicates that our protocol suppresses the TMSI, effectively reducing the added noise in the output signal. Second, any energy coming from the mechanical inputs at the central frequency will be trapped within the mechanical mode, inducing enhanced excitation of mechanical phonons. Such enhancement leads to an amplification of the transduction efficiency, provided that the system is stable.

The second result states that $\TT_\pm^{\bqty{k}}\pqty{\w}=0$ for $k>2$. Furthermore, for more than four sidebands ($N > 2$), $\mathbf{T}\pqty{\omega}$ and $\TT_\pm^{\bqty{k}}\pqty{\w}$ remain identical to the $N=2$ case, meaning sidebands beyond $N=2$ do not contribute to the transfer matrices at the central frequency. Additionally, by examining the structure of $\TT_\pm^{\bqty{1}}\pqty{\w}$, we discover that $\TT_\pm^{\bqty{1}}\pqty{\w}$ only couples the EM outputs to the mechanical inputs at the nearest lower sideband, and $\TT_\pm^{\bqty{2}}\pqty{\w}$ only couples the EM outputs to the conjugate components of the EM inputs at the second nearest lower sideband. For example, if we consider the output of the optical cavity, the only terms being mixed into $\A{\oo,\out}\pqty{\w}$ by $\TT_\pm^{\bqty{k}}\pqty{\w}$ are
\begin{align}
\label{eq:mixed_aout}
    &\A{\oo,\out}\pqty{\w} = \cdots +\sum_{i\in\Bqty{\oo,\ee,\oo \mathrm{I}, \ee \mathrm{I}}} \mathbf{V}_{\oo,i'}\Ad{i,\iin}\pqty{\w - 4\w_\mm} \notag\\&+ \mathbf{U}_{\mathrm{\oo,m}}\A{\mm,\iin}\pqty{\w-2\w_\mm}
    + \mathbf{U}_{\oo,\mm'}\Ad{\mm,\iin}\pqty{\w-2\w_\mm} \ ,
\end{align}
where $\mathbf{U}=\TT_-^{\bqty{1}}\pqty{\w}$ and $\mathbf{V}=\TT_+^{\bqty{2}}\pqty{\w}$. Eq.~\eqref{eq:mixed_aout} includes sideband noise sources being introduced by the PD protocol. They can be suppressed by squeezing the corresponding inputs because their frequencies are different from the signal's~\cite{Lau2020-si}. Eq.~\eqref{eq:mixed_aout} also indicates that our strategy trades the central frequency mechanical noise with the sideband noise. Whether such an observation leads to some error suppression strategy is subject to future studies. Finally, it is worth mentioning that mechanical annihilation operator at the negative frequency and the mechanical creation operator at positive frequency all equal zero, i.e., $\A{\mm,\iin}\pqty{\w}=0$ for $\omega < 0$ and $\Ad{\mm,\iin}\pqty{\w}=0$ for $\omega > 0$, because the mechanical operators are rotated by the laser frame (see Appendix~\ref{app:input_output}).

\section{Transfer matrix solution}\label{sec:proof}
In this section we prove the results presented in Sec.~\ref{sec:param_driven_sys}.  The
reader who is not interested in the technical details of the proof may skip ahead to Sec.~\ref{sec:ideal_transducer}.

We start by presenting a \textit{formal solution} of the transfer matrices in Eq.~\eqref{eq:general_transfer_T}
\begin{subequations}
    \label{eq:TT}
    \begin{align}
        \TT &= \pqty{\Bb^\mathrm{T} \mathbf{X} \Bb- \II} \\
        \TT_{\pm}^{\bqty{k}} &= \Bb^\mathrm{T}\mathbf{X}\prod_{i=1}^{k} \pqty{\Ab_\pm\XX{i}{\pm}}\mathbf{B} \ , \label{eq:Tpmk}
    \end{align}
\end{subequations}
where
$
    \mathbf{X} = \pqty{-i\omega\II - \AD-\Xm^{\bqty{1}}-\Xp^{\bqty{1}}}^{-1}
$
and $\Xpm^{\bqty{k}}$, $\XX{k}{\pm}$ can be obtained using the recursive relation
\begin{subequations}
\label{eq:recursive}
\begin{align}
    \XX{k}{\pm}&=\bqty{-i\pqty{\omega \pm 2k\omega_\mm}\II-\AD-\Xpm^{\bqty{k+1}}}^{-1} \label{eq:Xinv}\\
    \Xpm^{\bqty{k+1}}&=\Apm \XX{k+1}{\pm} \Amp \label{eq:AXA}
\end{align}
\end{subequations}
and the boundary condition
\begin{equation}
    \label{eq:boundary}
    \XX{N}{\pm} = \bqty{-i\pqty{\omega\pm2N\omega_\mm}\II-\AD}^{-1} \ .
\end{equation}
(We use the convention $\prod_{i=1}^{k}\mathbf{O}_i = \mathbf{O}_1 \mathbf{O}_2 \ldots$ for the $\prod$ symbol.) The above solution is obtained by iteratively solving each equation in Eq.~\eqref{eq:expanded_eom}, starting with the boundary equations (see Appendix~\ref{app:freq_osc} for details).

Next we present the results in Sec.~\ref{sec:param_driven_sys} as theorems and provide the corresponding proofs. Without loss of generality, only the proofs for the upper sidebands matrices (with $+$ in the subscript) are presented, and the same procedures trivially apply to the lower sidebands matrices. 

Given the formal solution, the key observation is given by the following lemma.
\begin{lemma}\label{lem:diag}
$\XX{k}{\pm}$ and $\Xpm^{\bqty{k}}$ are block diagonal matrices with $2 \times 2$ blocks.
\end{lemma}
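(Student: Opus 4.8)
The plan is to prove the block-diagonal structure by backward induction on the sideband index $k$, running from the boundary $k=N$ down to $k=1$. The key structural fact I will exploit is the sparsity pattern of the constituent matrices: $\AD = \diag(\Ab(t))$ is diagonal, hence trivially block diagonal with $2\times2$ blocks (pairing each $\A{i}$ with its conjugate $\Ad{i}$ in the ordering of $\mathbf{a}(t)$), while $\Ap$ and $\Am$ from Eq.~\eqref{eq:Apm} have their nonzero $2\times2$ sub-blocks $\mathbf{Q}_{\mathrm{am}},\mathbf{Q}_{\mathrm{mc}},\mathbf{Q}_{\mathrm{cm}},\mathbf{Q}_{\mathrm{ma}}$ placed in off-diagonal positions that permute the three mode-pairs (optical, electrical, mechanical) among themselves. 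I would first verify explicitly that, under the mode ordering induced by $\mathbf{a}(t)$, every one of these matrices respects the coarse $2\times2$ block partition.

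First I would establish the base case. By the boundary condition Eq.~\eqref{eq:boundary}, $\XX{N}{\pm} = [-i(\w\pm2N\w_\mm)\II - \AD]^{-1}$, and since $\AD$ is diagonal, the bracketed matrix is diagonal and so is its inverse; thus $\XX{N}{\pm}$ is block diagonal (indeed diagonal) with $2\times2$ blocks. Next, for the inductive step, I assume $\XX{k+1}{\pm}$ is block diagonal and show the claim propagates to $\Xpm^{\bqty{k+1}}$ and then to $\XX{k}{\pm}$. For $\Xpm^{\bqty{k+1}} = \Apm\XX{k+1}{\pm}\Amp$ from Eq.~\eqref{eq:AXA}, the argument is that a product of block-partitioned matrices is block-partitioned: since $\Apm$, $\XX{k+1}{\pm}$, and $\Amp$ all respect the $2\times2$ partition, so does their product. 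Then for Eq.~\eqref{eq:Xinv}, $\XX{k}{\pm} = [-i(\w\pm2k\w_\mm)\II - \AD - \Xpm^{\bqty{k+1}}]^{-1}$: the bracketed matrix is a sum of block-diagonal matrices hence block diagonal, and the inverse of an invertible block-diagonal matrix is block diagonal (each block inverts independently). This closes the induction.

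The main obstacle, and the step that requires genuine care rather than routine bookkeeping, is the block-diagonal-product claim for $\Apm\XX{k+1}{\pm}\Amp$. This is the crux because $\Ap$ and $\Am$ are emphatically \emph{not} block diagonal in the naive sense — they are off-diagonal in the three-mode structure, coupling distinct mode-pairs. The resolution is that the relevant notion of ``block diagonal with $2\times2$ blocks'' refers to the fine partition into the three mode-pairs $\{\A{\oo},\Ad{\oo}\}$, $\{\A{\ee},\Ad{\ee}\}$, $\{\A{\mm},\Ad{\mm}\}$, and what I must check is that the composite map $\Apm(\cdot)\Amp$ sends each $2\times2$ block to a $2\times2$ block, even though the intermediate factors shuffle which block is which. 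Concretely, reading off Eq.~\eqref{eq:Apm}, the nonzero entries of $\Ap$ connect the electrical pair to the mechanical pair and the mechanical pair back to the optical pair, so conjugating a block-diagonal $\XX{k+1}{\pm}$ by this ``rotation'' returns a matrix that is again supported only on the diagonal $2\times2$ blocks. I would make this rigorous by writing $\Apm$ and $\Amp$ in their block form and computing the block entries of the triple product directly, confirming that the off-diagonal blocks cancel or vanish by the placement of zeros.

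Once Lemma~\ref{lem:diag} is in hand, the downstream consequences follow: the block-diagonal form of $\mathbf{X}$ (built from $\AD$, $\Xm^{\bqty{1}}$, $\Xp^{\bqty{1}}$ via the same inverse-of-block-diagonal argument) yields the claimed structure of $\TT(\w)$ and the vanishing of conjugate transmissions, and the recursive product in Eq.~\eqref{eq:Tpmk} inherits the propagation structure that pins down which sidebands survive. I would therefore keep the proof of the lemma self-contained and purely structural, deferring the physical interpretation to the theorems that consume it.
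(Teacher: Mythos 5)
Your overall strategy --- backward induction from the boundary $k=N$, with the crux being an explicit block computation of the triple product $\Apm\XX{k+1}{\pm}\Amp$ --- is exactly the paper's strategy. However, there is a genuine error in how you identify the blocks, and with your choice the key step fails. You take the $2\times2$ blocks to be the mode--conjugate pairs $\Bqty{\A{\oo},\Ad{\oo}}$, $\Bqty{\A{\ee},\Ad{\ee}}$, $\Bqty{\A{\mm},\Ad{\mm}}$. The correct partition is the contiguous one inherited from the ordering of the state vector $\mathbf{a}=\bqty{\A{\oo},\A{\ee},\Ad{\oo},\Ad{\ee},\A{\mm},\Ad{\mm}}^\mathrm{T}$: block 1 is the pair of EM \emph{annihilation} operators $\Bqty{\A{\oo},\A{\ee}}$, block 2 is the pair of EM \emph{creation} operators $\Bqty{\Ad{\oo},\Ad{\ee}}$, and block 3 is the mechanical pair. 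In that partition $\Am$ has nonzero blocks only at positions $(1,3)$ and $(3,2)$, $\Ap$ only at $(2,3)$ and $(3,1)$, and the sandwich $\Ap\mathbf{D}\Am$ lands purely on the diagonal blocks $(2,2)$ and $(3,3)$, which is Eq.~\eqref{eq:ApDAm}. In \emph{your} partition the product does not close: writing $\mathbf{D}_\mm=\pqty{\begin{smallmatrix} d_1 & d_2 \\ d_3 & d_4\end{smallmatrix}}$ for the mechanical block, the $(\oo,\ee)$ off-diagonal block of $\Ap\mathbf{D}\Am$ is
\begin{equation}
    \begin{pmatrix} 0 & 0 \\ iG^*_\oo & iG^*_\oo \end{pmatrix}
    \begin{pmatrix} d_1 & d_2 \\ d_3 & d_4 \end{pmatrix}
    \begin{pmatrix} 0 & -iG_\ee \\ 0 & iG_\ee \end{pmatrix}
    =
    \begin{pmatrix} 0 & 0 \\ 0 & G^*_\oo G_\ee\bqty{\pqty{d_1+d_3}-\pqty{d_2+d_4}} \end{pmatrix} ,
\end{equation}
which is nonzero already at the boundary step, where $\mathbf{D}=\XX{N}{+}$ is diagonal with $d_1\neq d_4$ in general. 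So the ``direct computation'' you defer to would not confirm your claim; it would refute it. The entry above is perfectly consistent with the lemma --- it sits inside the paper's diagonal block $(2,2)$ of creation operators --- but it is an off-diagonal coupling between your $\oo$ and $\ee$ mode-pairs.

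The misidentification is not cosmetic, because the downstream theorems consume the partition itself, not just the words ``block diagonal.'' Theorem 1's physical conclusion that conjugate transmissions vanish, $\TT_{i,j'}\pqty{\w}=0$, follows precisely because annihilation operators scatter only among annihilation operators (block 1) and creation among creation (block 2); had the blocks paired $\A{i}$ with $\Ad{i}$, block diagonality would be compatible with nonzero $\TT_{\oo,\ee'}$ and would say nothing about suppressing the two-mode-squeezing interaction. Your proof becomes the paper's proof once you replace your partition with the contiguous one and carry out the triple-product computation there; everything else (base case, propagation through the inverse in Eq.~\eqref{eq:Xinv}, and the second application of Eq.~\eqref{eq:AXA} to get $\Xpm^{\bqty{k}}$) is sound as written.
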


\begin{proof}
We prove the lemma by induction. First, for $k=N$, it is straightforward to verify that $\XX{N}{+}$ is block diagonal from Eq.~\eqref{eq:boundary} because the inverse of a block diagonal matrix is also block diagonal, i.e., $\mathbf{D}^{-1}\equiv\diag\bqty{\mathbf{D}_1, \mathbf{D}_2, \mathbf{D}_3}^{-1} = \diag\bqty{\mathbf{D}^{-1}_1, \mathbf{D}^{-1}_2, \mathbf{D}^{-1}_3}$. Since the matrix on the right-hand side of Eq.~\eqref{eq:boundary} is diagonal, we are free to partition it into $2 \times 2$ submatrices.

Second, we show $\XX{k}{+}$, $\XI{k}{+}$ and $\XI{k+1}{+}$ are block diagonal if $\XX{k+1}{+}$ is block diagonal. From the definition of $\mathbf{A}_\pm$ in Eq.~\eqref{eq:Apm}, we find that $\Ap \mathbf{D} \Am$ preserves the block diagonal structure of $\mathbf{D}$ when $\mathbf{D}_i$ are $2 \times 2$ matrices
\begin{equation}\label{eq:ApDAm}
    \Ap \mathbf{D} \Am = \begin{pmatrix}
        0 & 0 & 0   \\
        0 & \mathbf{Q}_\mathrm{cm} \mathbf{D}_3 \mathbf{Q}_\mathrm{mc} & 0   \\
        0 & 0 &  \mathbf{Q}_\mathrm{am}\mathbf{D}_1 \mathbf{Q}_\mathrm{ma}
    \end{pmatrix} \ .
\end{equation}
Therefore, based on Eq.~\eqref{eq:AXA}, $\XI{k+1}{+}$ is block diagonal with $2 \times 2$ blocks. Then every term within the square brackets on the right-hand side of Eq.~\eqref{eq:Xinv} is block diagonal. Thus $\XX{k}{+}$ is block diagonal. Finally, using Eq.~\eqref{eq:ApDAm} and~\eqref{eq:AXA} again, we prove that $\XI{k}{+}$ is also block diagonal.
\end{proof}

Lemma~\ref{lem:diag} allows us to simplify the recursive relation using the block diagonal structure of $\XX{k}{\pm}$ and $\Xpm^{\bqty{k}}$, which opens the way for the following three theorems.

\begin{theorem}
$\mathbf{T}$ is block diagonal with block sizes $l\times l$, $l\times l$ and $2k_\mm\times 2k_\mm$, where $l=k_\ee + k_\oo$ and $k_i$ are the number of inputs on cavity $i$.
\end{theorem}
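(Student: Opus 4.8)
The plan is to read off the block structure of $\TT = \Bb^\mathrm{T}\mathbf{X}\Bb - \II$ directly from its two constituent factors: Lemma~\ref{lem:diag} controls the inner factor $\mathbf{X}$, while the explicit form of $\Bb$ in Eq.~\eqref{eq:Bmat} controls the outer factors. The whole argument is an alignment check between the $2\times 2$ block grid of $\mathbf{X}$ and the coarse row partition of $\Bb$.

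First I would establish that $\mathbf{X}$ itself is block diagonal with $2\times 2$ blocks. By its definition $\mathbf{X} = \pqty{-i\w\II - \AD - \Xm^{\bqty{1}} - \Xp^{\bqty{1}}}^{-1}$, where $\AD = \diag\pqty{\Ab\pqty{t}}$ is diagonal and, by Lemma~\ref{lem:diag}, $\Xm^{\bqty{1}}$ and $\Xp^{\bqty{1}}$ are block diagonal with $2\times 2$ blocks. Hence the bracketed matrix is block diagonal with $2\times 2$ blocks, and since inversion acts blockwise on a block diagonal matrix, $\mathbf{X} = \diag\bqty{\mathbf{X}_1,\mathbf{X}_2,\mathbf{X}_3}$ with each $\mathbf{X}_j$ a $2\times 2$ matrix. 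The key point is that these three blocks align with the ordering of $\mathbf{a}$: $\mathbf{X}_1$ acts on the EM annihilation sector $\Bqty{\A{\oo},\A{\ee}}$, $\mathbf{X}_2$ on the EM creation sector $\Bqty{\Ad{\oo},\Ad{\ee}}$, and $\mathbf{X}_3$ on the mechanical sector $\Bqty{\A{\mm},\Ad{\mm}}$.

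Next I would exploit the matching block structure of $\Bb$. From Eq.~\eqref{eq:Bmat}, $\Bb$ is block diagonal in the $2{+}2{+}2$ row partition, with the $2\times l$ block $\mathbf{D}$ sitting in the EM annihilation rows, a second copy of $\mathbf{D}$ in the EM creation rows, and the $2\times 2k_\mm$ block $\mathbf{M}$ in the mechanical rows. Because each nonzero block of $\Bb$ lives entirely inside a single $2$-row sector, no cross terms survive and the product factorizes sector by sector, giving
\[
\Bb^\mathrm{T}\mathbf{X}\Bb = \diag\bqty{\mathbf{D}^\mathrm{T}\mathbf{X}_1\mathbf{D},\ \mathbf{D}^\mathrm{T}\mathbf{X}_2\mathbf{D},\ \mathbf{M}^\mathrm{T}\mathbf{X}_3\mathbf{M}}\ ,
\]
whose diagonal blocks have sizes $l\times l$, $l\times l$ and $2k_\mm\times 2k_\mm$ with $l = k_\ee + k_\oo$. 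Subtracting $\II$ touches only the diagonal and so preserves this structure, yielding $\TT = \diag\bqty{\TT_\mathrm{a},\TT_\mathrm{c},\TT_\mm}$ with the claimed block sizes. These steps are essentially bookkeeping, so the only substantive obstacle is the block-diagonality of $\mathbf{X}$, which is already secured by Lemma~\ref{lem:diag}; the vanishing of the off-diagonal blocks of $\Bb^\mathrm{T}\mathbf{X}\Bb$ then follows precisely because no nonzero block of $\Bb$ straddles two different $2\times 2$ sectors. I would close by noting that the separation of the annihilation block $\TT_\mathrm{a}$ from the creation block $\TT_\mathrm{c}$ is exactly the absence of conjugate transmissions, $\TT_{i,j'} = 0$, recovering the first physical implication stated in Sec.~\ref{sec:param_driven_sys}.
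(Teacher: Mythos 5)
Your proposal is correct and follows essentially the same route as the paper's own proof: invoke Lemma~\ref{lem:diag} to get the $2\times 2$ block-diagonal structure of $\mathbf{X}$, then use the matching block structure of $\Bb$ from Eq.~\eqref{eq:Bmat} to conclude that $\Bb^\mathrm{T}\mathbf{X}\Bb - \II$ is block diagonal with blocks of sizes $l\times l$, $l\times l$ and $2k_\mm\times 2k_\mm$. Your version merely spells out the sector-by-sector factorization $\diag\bqty{\mathbf{D}^\mathrm{T}\mathbf{X}_1\mathbf{D},\ \mathbf{D}^\mathrm{T}\mathbf{X}_2\mathbf{D},\ \mathbf{M}^\mathrm{T}\mathbf{X}_3\mathbf{M}}$ that the paper leaves implicit.
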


\begin{proof}
The proof of this theorem directly follows lemma \ref{lem:diag}. Because $\XI{1}{\pm}$ are block diagonal with $2 \times 2$ blocks, $\mathbf{X}$ is block diagonal with $2 \times 2$ blocks. Recall from Eq.~\eqref{eq:Bmat} that $\Bb$ is also a block diagonal matrix with block size $2 \times l$, $2 \times l$ and $2 \times 2k_\mm$. As a consequence, $\Bb^\mathrm{T} \mathbf{X} \Bb$, therefore $\mathbf{T}$, are block diagonal matrices with block sizes $l\times l$, $l\times l$ and $2k_\mm\times 2k_\mm$.
\end{proof}
\begin{theorem}\label{the:Tpm2}
$\TT_{\pm}^{\bqty{k}}=0$ for $k > 2$.
\end{theorem}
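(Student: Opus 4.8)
The plan is to exploit the product structure of the formula for $\TT_{\pm}^{\bqty{k}}$ in Eq.~\eqref{eq:Tpmk} together with the block-diagonal structure established in Lemma~\ref{lem:diag}. Recall that $\TT_{+}^{\bqty{k}} = \Bb^\mathrm{T}\mathbf{X}\prod_{i=1}^{k}\pqty{\Ap\XX{i}{+}}\mathbf{B}$. The matrix $\Ap$ has the sparse off-diagonal block form given in Eq.~\eqref{eq:Apm}: its only nonzero blocks are $\mathbf{Q}_\mathrm{cm}$ in the (microwave-creation, mechanical) position and $\mathbf{Q}_\mathrm{ma}$ in the (mechanical, optical-annihilation) position. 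By Lemma~\ref{lem:diag} each $\XX{i}{+}$ is block diagonal with $2\times2$ blocks. So the first step is to compute the product $\Ap\,\mathbf{D}$ for a generic block-diagonal $\mathbf{D}=\diag\bqty{\mathbf{D}_1,\mathbf{D}_2,\mathbf{D}_3}$ and see exactly which $2\times2$ blocks survive.

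First I would track where nonzero blocks can land as the product $\prod_{i=1}^k\pqty{\Ap\XX{i}{+}}$ is built up left to right. Since $\Ap$ maps the mechanical block into the optical-annihilation block and the mechanical block into the microwave-creation block (and nothing else), each factor $\Ap\XX{i}{+}$ acts as a fixed ``shift'' permutation on the three block-row/column slots, followed by multiplication by the diagonal blocks of $\XX{i}{+}$. The essential point is that $\Ap$ has a nilpotent-like structure on the block level: only one nonzero entry feeds the mechanical row and the mechanical column connects only to the EM slots. I expect to show by direct block bookkeeping that after two applications of $\Ap$ the ``path'' through the block index set is exhausted — any third factor $\Ap\XX{3}{+}$ forces a block product through a slot that $\Ap$ annihilates, giving zero. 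Concretely, I would argue that $\Ap\XX{i}{+}\Ap$ already kills the relevant coupling, so that $\prod_{i=1}^3\pqty{\Ap\XX{i}{+}}$ contains the factor $\Ap\,(\text{block-diag})\,\Ap\,(\text{block-diag})\,\Ap$, and this triple $\Ap$ chain vanishes because the block-index path $\oo' \to \mm \to \ee' \to ?$ runs off the support of $\Ap$.

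The cleanest way to organize this is to define the block-index ``transition'' induced by $\Ap$ on the slots $\Bqty{\mathrm{a},\mathrm{c},\mathrm{m}}$ (optical/annihilation, microwave/creation, mechanical) and observe it is a partial map with no cycles of length $\geq 3$: the mechanical slot receives from the EM slots and sends to the EM slots, but the EM slots do not chain to each other. Since interposing block-diagonal $\XX{i}{+}$ factors cannot create new off-diagonal couplings, the composite $k$-fold transition is empty once $k>2$, hence every block of $\prod_{i=1}^k\pqty{\Ap\XX{i}{+}}$ is zero and $\TT_{+}^{\bqty{k}}=0$. The lower-sideband case follows identically with $\Am$ in place of $\Ap$, using $\mathbf{Q}_\mathrm{am}$ and $\mathbf{Q}_\mathrm{mc}$, as noted in the convention at the start of Sec.~\ref{sec:proof}.

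I expect the main obstacle to be stating the vanishing cleanly rather than proving it: the block-index argument is really a claim that a certain product of sparse block matrices has no surviving path, and it is tempting to verify it by brute-force matrix multiplication. The hard part will be isolating the precise reason $k=2$ is the cutoff — i.e., making rigorous that the alternating pattern $\Ap(\text{diag})\Ap(\text{diag})\Ap$ necessarily contains a zero block product, independent of what the diagonal blocks $\XX{i}{+}$ actually are. I would handle this by explicitly writing out the surviving block of $\Ap\XX{1}{+}\Ap\XX{2}{+}$ (which should be supported only on the diagonal EM-to-EM slots via the $\mathbf{Q}_\mathrm{cm}\mathbf{D}\mathbf{Q}_\mathrm{ma}$-type contractions already computed in Eq.~\eqref{eq:ApDAm}) and then showing that one more left-multiplication by $\Ap$ lands entirely in the part of the block structure that $\Ap$ sends to zero.
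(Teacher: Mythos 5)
Your proposal is correct and follows essentially the same route as the paper: invoke Lemma~\ref{lem:diag}, compute the block-sparsity pattern of $\Ab_\pm\XX{1}{\pm}$ and $\Ab_\pm\XX{1}{\pm}\Ab_\pm\XX{2}{\pm}$ explicitly (the latter surviving only in the single EM$'$-to-EM block via the $\mathbf{Q}_\mathrm{cm}\XX{1}{3}\mathbf{Q}_\mathrm{ma}\XX{2}{1}$ contraction), and note that appending a third factor $\Ab_\pm\XX{3}{\pm}$ annihilates it, which is the paper's Eqs.~\eqref{eq:sideband_input_upper}. Only trivial wording slips (the surviving block sits at the off-diagonal $(2,1)$ position, not a diagonal slot, and the third $\Ab_\pm$ is appended on the right, not the left) — neither affects the argument.
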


\begin{proof}
We prove this theorem by proving a sufficient condition: $\prod_{i=1}^{k}\pqty{\Ab_\pm\XX{k}{+}} = 0$ for all $ k > 2$. Recall from lemma~\ref{lem:diag} that $\XX{+}{k}$ are block diagonal. We use the notation $\XX{k}{i}$ to denote each of its blocks, i.e., $ \XX{k}{+} = \diag\bqty{\XX{k}{1}, \XX{k}{2}, \XX{k}{3}}$. Then it is straightforward to verify $\prod_{i=1}^{3}\pqty{\Am\XX{3}{+}} = 0$ by writing down the cases of $k = 1, 2$ explicitly
\begin{subequations}
\label{eq:sideband_input_upper}
\begin{align}
    \Ab_+\XX{1}{+} &= \begin{pmatrix}
        0 & 0 & 0   \\
        0 & 0 & \mathbf{Q}_\mathrm{cm} \XX{1}{3}  \\
        \mathbf{Q}_\mathrm{ma} \XX{1}{1} & 0 & 0
        \end{pmatrix} \\
    \Ab_+\XX{1}{+}\Ab_+\XX{2}{+} &= \begin{pmatrix}
        0 & 0 & 0 \\
        \mathbf{Q}_\mathrm{cm}\XX{1}{3}\mathbf{Q}_\mathrm{ma}\XX{2}{1}& 0 & 0 \\
        0 & 0 & 0
    \end{pmatrix} \ .
\end{align}
\end{subequations}
Therefore we have $\prod_{i=1}^{k}\pqty{\Am\XX{k}{+}} = 0$ for $k > 2$.
\end{proof}

\begin{theorem}
\label{the:indepdent_N}
$\mathbf{T}$, $\TT_{\pm}^{\bqty{1}}$ and $\TT_{\pm}^{\bqty{2}}$ are independent of $N$ for $N \ge 2$.
\end{theorem}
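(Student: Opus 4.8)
The plan is to exploit the fact that $N$ enters the formal solution \eqref{eq:TT}--\eqref{eq:Tpmk} only through the point at which the recursion \eqref{eq:recursive} terminates, namely the boundary matrix $\XX{N}{\pm}$ in \eqref{eq:boundary}. I would show that $\mathbf{T}$, $\TT_\pm^{\bqty{1}}$ and $\TT_\pm^{\bqty{2}}$ are each assembled from only a fixed, small set of low-$k$ blocks of the $\XX{k}{\pm}$, and that the sub-recursion feeding precisely those blocks closes before it can reach $k=N$. Once this is established, growing $N$ beyond $2$ changes nothing in those blocks, and hence nothing in the three transfer matrices.

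First I would identify exactly which blocks appear, using the block patterns already in hand. From $\mathbf{X}=\pqty{-i\omega\II-\AD-\Xm^{\bqty{1}}-\Xp^{\bqty{1}}}^{-1}$ together with $\Xp^{\bqty{1}}=\Ap\XX{1}{+}\Am$ and the block form \eqref{eq:ApDAm} of $\Ap\mathbf{D}\Am$, the matrix $\mathbf{X}$ depends only on the blocks $\XX{1}{1}$ and $\XX{1}{3}$ (and, via the analogous pattern for $\Am\mathbf{D}\Ap$, on the mirror blocks of $\XX{1}{-}$). The nonzero block of $\Ab_+\XX{1}{+}$ in \eqref{eq:sideband_input_upper} again uses only $\XX{1}{1}$ and $\XX{1}{3}$, while the surviving block of the two-factor product in the same equation uses only $\XX{1}{3}$ and $\XX{2}{1}$. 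Thus on the upper side the entire computation reduces to the three blocks $\XX{1}{1}$, $\XX{1}{3}$ and $\XX{2}{1}$, with the lower side handled by the symmetric argument.

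The crux — and the step I expect to be the main obstacle — is to prove a \emph{standalone} property of the recursion that severs the dependence on the boundary. Because the upper-left block of $\Ap\mathbf{D}\Am$ in \eqref{eq:ApDAm} vanishes identically, the first block of $\Xpm^{\bqty{k+1}}$ is always zero, so \eqref{eq:Xinv} collapses: $\XX{k}{1}$ is simply the inverse of the first $2\times2$ diagonal block of $-i(\omega+2k\omega_\mm)\II-\AD$ for every $1\le k\le N$, a closed form carrying no recursive tail and hence no $N$-dependence. Propagating one step, for $k<N$ the block $\XX{k}{3}$ is the inverse of a diagonal block minus $\mathbf{Q}_\mathrm{am}\XX{k+1}{1}\mathbf{Q}_\mathrm{ma}$, which is fixed once the standalone $\XX{k+1}{1}$ is known. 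The lower side is identical with the second block playing the standalone role. The care required here is to confirm the chains never touch \eqref{eq:boundary}: the chain for $\XX{1}{3}$ reaches only $\XX{2}{1}$, which is standalone and stops at $k=2$.

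Finally I would assemble the conclusion for $N\ge2$. The blocks $\XX{1}{1}$ and $\XX{2}{1}$ are standalone, hence manifestly $N$-independent; and $\XX{1}{3}$ uses its generic form \eqref{eq:Xinv} (valid since $1<N$) fed only by the standalone $\XX{2}{1}$, so it too is $N$-independent. Substituting these into \eqref{eq:TT} and \eqref{eq:Tpmk} leaves $\mathbf{T}$, $\TT_\pm^{\bqty{1}}$ and $\TT_\pm^{\bqty{2}}$ unchanged for all $N\ge2$. The only threshold subtlety is to verify that at $N=2$ the boundary value of $\XX{2}{1}$ from \eqref{eq:boundary} coincides with its standalone value, which holds because that boundary matrix is diagonal and its first block is exactly the standalone expression.
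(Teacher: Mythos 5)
Your proposal is correct and follows essentially the same route as the paper's proof: both reduce the recursion \eqref{eq:recursive} to block relations, use the vanishing block of $\Apm\mathbf{D}\Amp$ to obtain closed-form (``standalone'') blocks $\XX{k}{1}$, show that $\mathbf{X}$, $\Ab_\pm\XX{1}{\pm}$ and $\Ab_\pm\XX{1}{\pm}\Ab_\pm\XX{2}{\pm}$ involve only blocks whose dependency chains terminate before reaching the boundary when $N\ge 2$, and then conclude via Eqs.~\eqref{eq:TT}. Your explicit check that the boundary value of $\XX{2}{1}$ at $N=2$ coincides with its standalone form is a detail the paper leaves implicit, but it is the same argument.
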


\begin{proof}
The strategy of this proof is as follows. First we rewrite the recursive relation in Eqs.~\eqref{eq:recursive} as relations of the corresponding submatrices. Then we show every submatrix of $\XI{1}{\pm}$, $\Ab_+\XX{1}{+}$ and $\Ab_+\XX{1}{+}\Ab_+\XX{2}{+}$ are indepdent of $N$ when $N \ge 2$. Finally, we prove $\mathbf{T}$, $\TT_{\pm}^{\bqty{1}}$ and $\TT_{\pm}^{\bqty{2}}$ are independent of $N$ by directly following Eqs.~\eqref{eq:TT}.

Similar to the proof of theorem~\ref{the:Tpm2}, we denote the diagonal blocks of $\mathbf{A}_\mathrm{d}$ and $\Xp^{\bqty{k}}$ by
\begin{subequations}
    \begin{align}
        \mathbf{A}_\mathrm{d} &= \diag\bqty{\mathbf{A}_\mathrm{d1}, \mathbf{A}_\mathrm{d2}, \mathbf{A}_\mathrm{d3}}\\
        \Xp^{\bqty{k}} &= \diag\bqty{\mathbf{\Xi}^{\bqty{k}}_{1}, \mathbf{\Xi}^{\bqty{k}}_{2}, \mathbf{\Xi}^{\bqty{k}}_{3}} \ .
    \end{align}
\end{subequations}
Then from Eqs.~\eqref{eq:AXA}  and~\eqref{eq:ApDAm} we have
\begin{equation}
    \label{eq:Xi1Xi3}
    \XI{k}{1} = 0, \quad \XI{k}{3} = \mathbf{Q}_\mathrm{am} \XX{k}{1} \mathbf{Q}_\mathrm{ma} \ ,
\end{equation}
where
\begin{equation}
    \XX{k}{1} = \pqty{-i\pqty{\w + 2k\omega_\mm}\mathbf{I}-\mathbf{A}_\mathrm{d1}}^{-1}
\end{equation}
according to Eq.~\eqref{eq:Xinv} ($\XI{k+1}{1}$ is also zero from Eq.~\eqref{eq:Xi1Xi3}). The last element of $\Xp^{\bqty{k}}$ is given by
$
    \XI{k}{2} = \mathbf{Q}_\mathrm{am} \XX{k}{3}\mathbf{Q}_\mathrm{ma}
$
where
\begin{equation}\label{eq:X13}
    \XX{k}{3} = \pqty{-i\pqty{\w + 2k\omega_\mm}\mathbf{I}-\mathbf{A}_\mathrm{d3}-\XI{k+1}{3}}^{-1} \ .
\end{equation}
The key observation here is that $\XI{k}{1}$, $\XI{k}{3}$ and $\XX{k}{1}$ depend only on $k$; $\XI{k}{2}$ and $\XX{k}{3}$ depend only on $k$ and $k + 1$.

As a result, for $N \ge 2$, $\Xp^{\bqty{1}}$ (similarly $\Xm^{\bqty{1}}$), therefore $\mathbf{X}$, are independent of $N$. Additionally $\prod_{i=1}^{k} \pqty{\Ab_\pm\XX{i}{\pm}}$ are also independent of $N$ for $k=1,2$ based on Eqs.~\eqref{eq:sideband_input_upper}. Hence from Eqs.~\eqref{eq:TT}, we prove $\mathbf{T}$, $\TT_{\pm}^{\bqty{1}}$ and $\TT_{\pm}^{\bqty{2}}$ are independent of $N$ for $N \ge 2$.
\end{proof}

The above three theorems encapsulate all the results we presented in Sec.~\ref{sec:param_driven_sys}. Furthermore, theorem~\ref{the:indepdent_N} indicates that the exact solution of the transfer matrices can be found by examining only the cases of $N \leq 2$. Adding more sidebands will not change the result.
Lastly, by further examining the positions of non-zero blocks in Eqs.~\eqref{eq:sideband_input_upper}, we notice $\TT_+^{\bqty{1}}$ only couple the EM cavity operators to the mechanical cavity operators, while $\TT_+^{\bqty{2}}$ only couple the EM cavity operators to EM cavity operators.

\section{Ideal Transducer}\label{sec:ideal_transducer}
We apply the general theory to an ideal quantum transducer, which means the pump lasers are perfectly red detuned from the corresponding cavity frequencies by $\w_\mm$, i.e., $\Delta_\oo=\Delta_\ee=\w_\mm$ and the EM cavities are symmetric and lossless, i.e., $\kappa_{\ee,\mathrm{ex}}=\kappa_{\oo,\mathrm{ex}}=\kappa_\ee = \kappa_\oo=\kappa$ and $g_\ee=g_\oo=g$. We consider the scenario where $\A{\ee,\iin}$ is in the single-frequency coherent state with an average of one photon, and the goal of the transducer is to convert the microwave photon to an optical photon. The relevant figure of merits are the transduction efficiency and the added noise, both defined in the frequency domain. The former is defined as $\eta\pqty{\omega} \equiv \abs{\TT_{\oo,\ee}\pqty{\omega}}$, and
%\begin{equation}
%    \eta\pqty{\omega} \equiv \abs{\TT_{\oo,\ee}\pqty{\omega}} \equiv \abs{\frac{\A{\oo,\mathrm{out}}\pqty{\omega}}{\A{\ee,\mathrm{in}}\pqty{\omega}}} \ ,
%\end{equation}
%where the notation $\TT_{a,b}\pqty{\omega}$ represents the element in matrix $\TT\pqty{\omega}$ coupling output $a$ and input $b$. 
the latter is defined as~\cite{Caves1982-qx,Lau2020-si}
\begin{equation}
    2\pi\eta^2\pqty{\omega}S\pqty{\omega}\delta\pqty{\w-\w'}=\frac{1}{2}\expval{\acomm{\hat{J}\pqty{\w}}{\hat{J}\pqty{\w'}^\dagger}} \ ,
\end{equation}
where $\hat{J}\pqty{\w}$ is the noise terms being mixed into the output. 
%We emphasize here that $S\pqty{\w}$ is obtained from the second order moments, thus contains more terms than explicitly defined in the equation of motion Eq.~\eqref{eq:eom_matrix_form}. 
%We split $\hat{J}\pqty{\w}$ into explicit and implicit parts $\hat{J}\pqty{\w}=\hat{J}_\mathrm{ex}\pqty{\w} + \hat{J}_\mathrm{im}\pqty{\w}$ where
%\begin{align}
%    \hat{J}_\mathrm{ex}\pqty{\w}&=\TT_{\oo,\ee'}\pqty{\w}\Ad{\ee,\iin}\pqty{\w} \notag\\
%    &\quad+\TT_{\oo,\oo}\pqty{\w}\A{\oo,\iin}\pqty{\w}+\TT_{\oo,\oo'}\pqty{\w}\Ad{\oo,\iin}\pqty{\w}   \label{eq:Jex}\ ,
%\end{align}
%and $\hat{J}_\mathrm{im}\pqty{\w}$ contains all the implicitly defined inputs, e.g., the incident noise and sideband-frequency inputs (Eq.~\eqref{eq:mixed_aout}). 
Without loss of generality, we assume it has the form of
%$\hat{J}_\mathrm{im}\pqty{\w}=\sum_i \TT_{\oo,i}\pqty{\w}\A{i,\iin}+\TT_{\oo,i'}\pqty{\w}\Ad{i,\iin}$. 
$\hat{J}\pqty{\w}=\TT_{\oo,\ee'}\pqty{\w}\Ad{\ee,\iin}\pqty{\w}+\sum_i (\TT_{\oo,i}\A{i,\iin}+\TT_{\oo,i'}\Ad{i,\iin})$ where the index $i$ now includes the frequency shifts and goes through all the inputs on the right-hand side of Eq.~\eqref{eq:general_transfer_T} except for $\A{\ee, \iin}\pqty{\w}$ and $\Ad{\ee, \iin}\pqty{\w}$. For example, terms like $\A{\oo,\iin}\pqty{\w}$, $\A{\oo \mathrm{I},\iin}\pqty{\w}$ and $\Ad{\oo,\iin}\pqty{\w-4\omega_\mm}$ are included by index $i$.
$S\pqty{\omega}$ can be calculated directly from the elements of the transfer matrices (see Appendix~\ref{app:added_noise} for details)
\begin{align}
\label{eq:added_noise_expression}
    \eta^2\pqty{\w}S\pqty{\w}&=\frac{3}{2}\abs{\TT_{\oo,\ee'}\pqty{\w}}^2 \notag\\ &+ \frac{1}{2}\sum_i \pqty{\abs{\TT_{\oo,i}}^2+\abs{\TT_{\oo,i'}}^2 } \ ,
\end{align}
and a lower bound is given by~\cite{Lau2020-si}
\begin{equation}
\label{eq:S_lower_bound}
S\pqty{\w} \ge \frac{3}{2}R^2\pqty{\w} + \abs{\frac{1-\eta^2\pqty{\w}}{2\eta^2\pqty{\w}}+\frac{R^2\pqty{\w}}{2}} \ ,
\end{equation}
where $R^2\pqty{\w}=\abs{\TT_{\oo,\ee'}\pqty{\w}}^2/\eta^2\pqty{\w}$. The lower bound states that there will be excessive noise if the transduction efficiency goes beyond unity. Before proceeding, we make a few additional comments. First, we need to set the corresponding transfer matrix elements to zero in Eq.~\eqref{eq:added_noise_expression} for operators which are zero, e.g., $\A{\mm, \iin}\pqty{\w}$ where $\w<0$. Second, Eq.~\eqref{eq:added_noise_expression} is only true when all the incident noises are vacuum (for a general case, see Appendix~\ref{app:added_noise}). Third, when PD is applied, the lower bound can be achieved by squeezing every noisy input~\cite{Lau2020-si}. However, we will focus on the raw performance metrics without any squeezing operation here. Lastly, because the input signal frequency is $\omega_\mm$, we will focus on $\eta\pqty{\omega_\mm}$ and $S\pqty{\w_\mm}$ and omit the frequency dependence henceforth.

First we consider the limit of zero mechanical loss $\kappa_\mm \to 0$. Analytic solutions of $\TT$ can be obtained for $N=1$ and $N=2$ (see Appendix~\ref{app:freq_osc} and~\cite{notebook}). The $N=1$ solution was discussed in the context of entanglement enhancement~\cite{Mari2009-jx}. In this case, the transduction efficiency goes to unity
$
    \eta=1
$
and all the unwanted noise is completely suppressed
$
\abs{\TT_{\oo,\ee'}}=\abs{\TT_{\oo,\oo'}}=\abs{\TT_{\oo,\oo}}=0
$.
For $N=2$, the transduction efficiency is
$\eta=\sqrt{1+\kappa^2/16\omega^2_\mm}$ and additional noise is introduced by the unwanted reflection and sideband coupling $\abs{\TT_{\oo, \oo}}=\abs{\mathbf{V}_{\oo,\ee}}=\abs{\mathbf{V}_{\oo,\ee'}}=\kappa/4\omega_\mm$. In both cases, PD offers noise suppression in the sideband unresolved limit $4\omega_\mm \lesssim \kappa$. Even though the $N=1$ solution is preferable, we do not identify a controllable parameter such that, when sent to zero, the solution would converge at $N=1$ in the sideband unresolved limit. Therefore the $N=1$ solution may not always be physically attainable. As demonstrated in Fig~\ref{fig:ideal}, the PD introduces less noise than the constant driving even when more sidebands are included.

Next, we show that for a transducer with small non-vanishing $\kappa_\mm$, the PD strategy enhances transduction efficiency while reducing the added noise. Such enhancement is caused by the additional energy trapped inside the mechanical mode. The main result is illustrated in Fig~\ref{fig:ideal}, where we plot the transduction efficiency and added noise of a symmetric quantum transducer with different $\kappa_\mm$ values. The other parameters are modified from a real world experiment~\cite{higginbotham2018harnessing}. In particular, $\Omega$ is chosen by approximately matching the effective coupling strength $\abs{G_i}$ to the reported experimental value. We observe that although more added noise is introduced to the output when $\kappa_\mm$ increases, the PD solutions always have less added noise than the constant one when the noise is below the classical limit. Within the regions of acceptable noise, 
the transduction efficiency is amplified by the PD from approximately $\kappa_\mm \gtrsim 1 (\mathrm{Hz})$. We also numerically confirm that for $N>2$, the results have converged to the $N=2$ case (see Appendix~\ref{app:more_sidebands}). Although it seems unnecessary to boost the transduction efficiency above unity, the reduced noise could offer more benefits in terms of the quantum channel capacity~\cite{zhong2022quantum,Pirandola2009-gl}. Additionally, we can fine-tune the efficiency to unity by deliberately impedance mismatching the system.
\begin{figure}[t]
    \centering
    \includegraphics[width=\columnwidth]{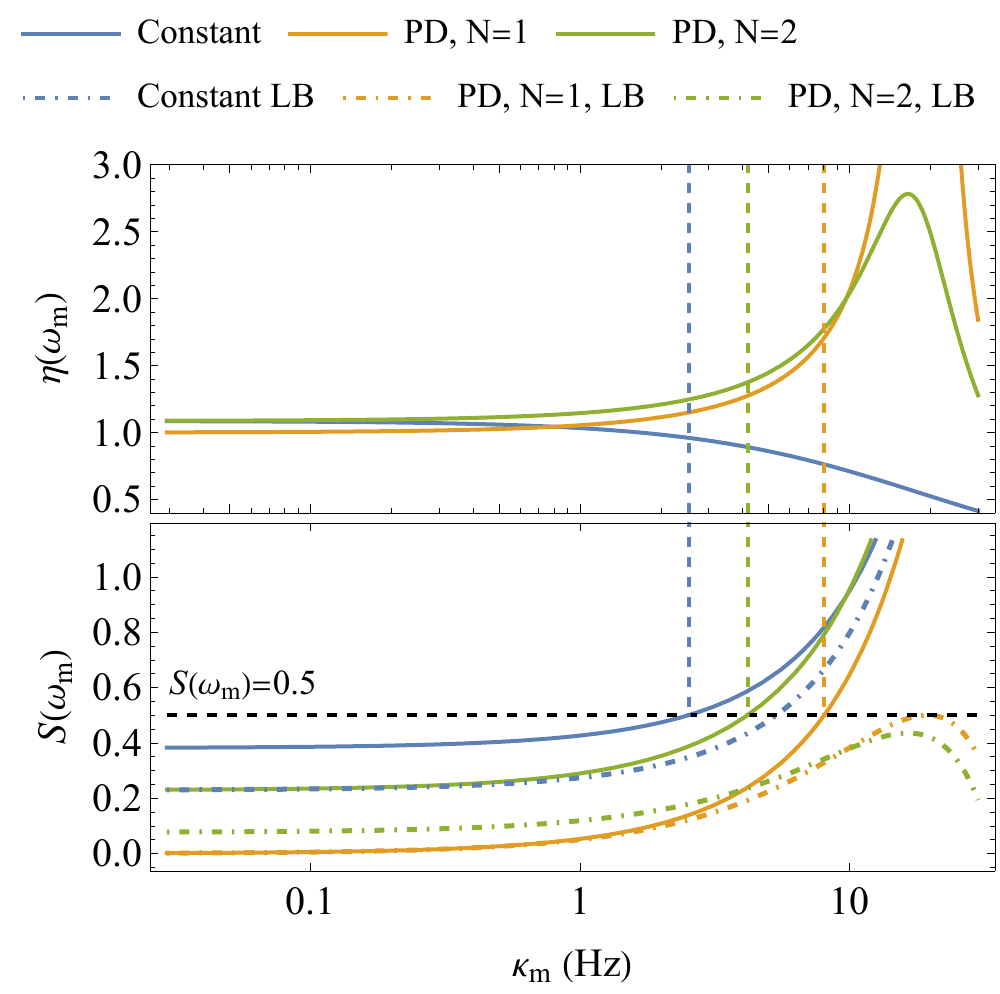}
    \caption{Transduction efficiency $\eta\pqty{\w_\mm}$ and added noise $S\pqty{\w_\mm}$ v.s. $\kappa_\mm$ of a symmetric transducer. The color dashed lines denote the $\kappa_\mm$ values where the added noise reaches the classical limit. Lower panel: the dash-dotted lines are the lower bounds (LBs) given by Eq.~\eqref{eq:S_lower_bound}; the black dashed line denotes the classical limit of added noise $S\pqty{\w_\mm}=0.5$. The parameter values used in the calculation are: $\omega_\mm/2\pi=1.4732\ \mathrm{MHz}$, $\kappa/2\pi=2.5\ \mathrm{MHz}$, $g/2\pi=3.8\ \mathrm{Hz}$ and $\Omega/2\pi = 500 \ \mathrm{MHz}$.
    }
    \label{fig:ideal}
\end{figure}

\section{Realistic Transducer}\label{sec:realistic_transducer}
This section demonstrates that the PD strategy is increasingly beneficial to a realistic optomechanical quantum transducer~\cite{higginbotham2018harnessing} than an ideal one by providing additional transduction efficiency amplification alongside noise reduction. Our analysis is based on the experimental configuration presented in~\cite{higginbotham2018harnessing}. Unfortunately, the experiment works in a noisy regime, i.e., $S\pqty{\w_\mm} > 0.5$, and the authors adopted a classical feed-forward protocol to reduce the added noise. While it is not clear whether a quantum feed-forward protocol exists, we still choose to work in the same parameter regime because our goal is to show that the PD strategy can readily improve the performance of an existing experimental transducer. 

Our results are reported in Fig~\ref{fig:exp} where we show the transduction efficiency and added noise as functions of the driving signal amplitude $\Omega$, whose range is chosen to keep the effective coupling strength $\abs{G_i}$ approximately around $\mathrm{kHz}$. From the upper panel, we clearly see that the PD offers a higher transduction efficiency than the constant driving throughout much of the observed region (starting from $\Omega \gtrsim 270(\mathrm{MHz})$), while the unity efficiency can be achieved by fine-tuning $\Omega$. From the lower panel, we observe that the added noise of the PD can be pushed below the level of the constant driving by increasing the driving amplitude. The PD starts exhibiting an advantage from $\Omega \gtrsim 512(\mathrm{MHz})$ for $N=1$ and $\Omega \gtrsim 636(\mathrm{MHz})$ for $N=2$. As with the ideal transducer, we also confirm that for $N > 2$, the results have converged to the $N=2$ case (see Appendix~\ref{app:more_sidebands}).

Lastly, we mention that the added noise is obtained by assuming all the noise inputs are in the vacuum state. In practice, the mechanical mode suffers from the optical-absorption-induced thermal noise~\cite{Ren2020-yd,MacCabe2020-yw}, which is the major bottleneck of optomechanical quantum transducers. On one hand, we emphasize that our results demonstrate that the PD can enhance the transduction efficiency without amplifying the vacuum noise, which is crucial for error cancellation strategies such as the feed-forward protocol~\cite{higginbotham2018harnessing}. Thus we expect the PD, combined with error cancellation techniques, to provide a viable path towards practical quantum transducers. On the other hand, we also expect the PD to offer some protection against the thermal noise if we detune the signal frequency from $\w_\mm$ because the PD decouples the mechanical inputs at the central frequency to the outputs. This possibility of thermal noise suppression is subject to future study.

\begin{figure}[t]
    \centering
    \includegraphics[width=\columnwidth]{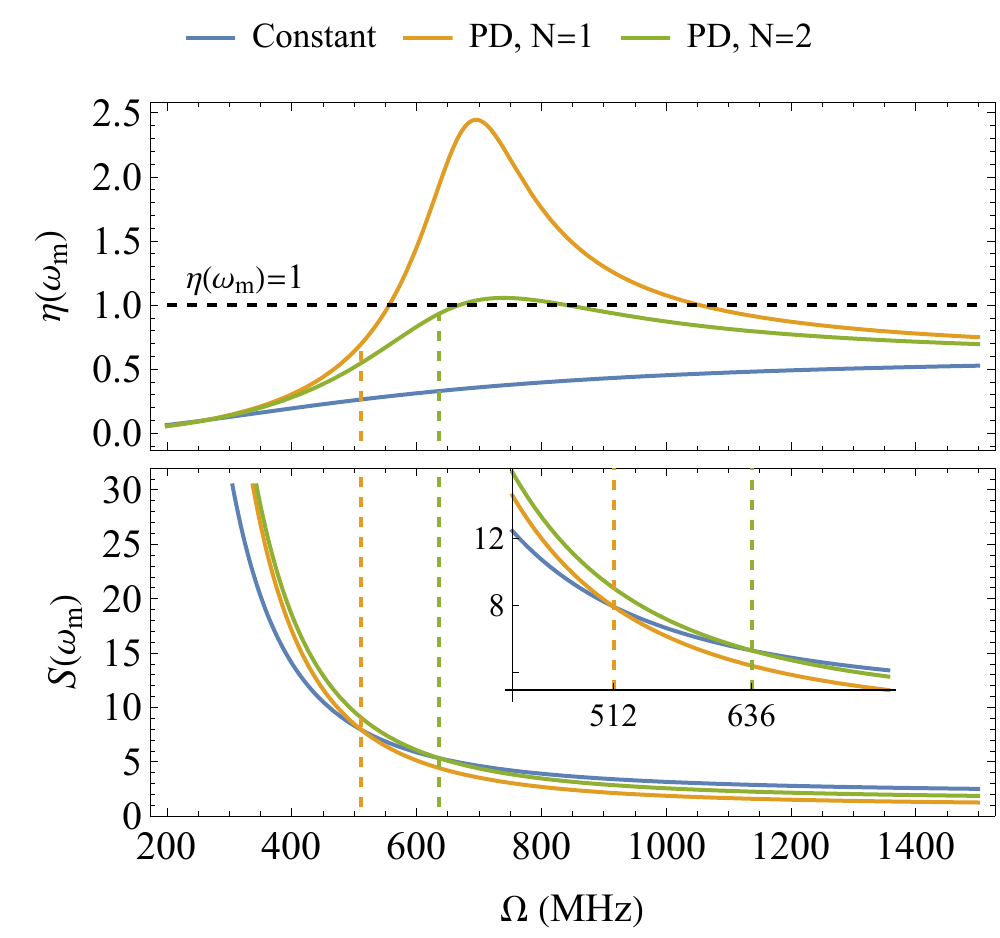}
    \caption{Transduction efficiency $\eta\pqty{\w_\mm}$ and added noise $S\pqty{\w_\mm}$ v.s. driving amplitude $\Omega$ of a realistic optomechanical quantum transducer ~\cite{higginbotham2018harnessing}. The color dashed lines denote the $\Omega$ value where the PD added noise falls below the constant one. Upper panel: the black dashed line denotes the unity transduction efficiency. Lower panel: the inset shows a zoomed-in version of the same plot. The numerical values used in the calculation are given in Table I of Appendix~\ref{app:num_vals}.}
     \label{fig:exp}
\end{figure}

\section{Conclusion and outlook}\label{sec:conclusion}
In this paper we present a theoretical analysis of the optomechanical quantum transducer with time-dependent controls. We demonstrate that, without amplifying the vacuum noise, better transduction efficiency can be achieved by replacing constant controls with oscillating ones. The improvement is applicable to existing experimental architectures. We hope this new development paves the way for more advanced time-dependent control protocols which, judging by the great success they have brought in designing quantum gates~\cite{Niu2019-tt, Werninghaus2021-bc, Baum2021-xg}, may lead to an even better transducer performance.

\begin{acknowledgments}
    The authors are grateful to Jie Luo, Zhi (Jackie) Yao, John Bell, Anastasiia Butko, Mariam Kiran and Wibe A. de Jong for useful discussions and feedback. This work was supported by the Laboratory Directed Research and Development Program of Lawrence Berkeley National Laboratory under U.S. Department of Energy Contract No. DE-AC02-05CH11231.
\end{acknowledgments}

\onecolumngrid
\appendix

\section{Interaction Picture Hamiltonian}
\label{app:interaction_hamiltonian}
In this section, we derive the interaction picture Hamiltonian of
\begin{equation}
    \hat{H}\pqty{t} =  \sum_{i\in \Bqty{\ee,\oo}} \Delta_i \hat{a}_i^{\dagger}\hat{a}_i + g_i\hat{x}_\mm \hat{a}_i^{\dagger}\hat{a}_i + \Omega_i\pqty{t}\Ad{i} + \Omega^*_i\pqty{t}\A{i} + w_\mm \Ad{\mm}\A{\mm}     
\end{equation}
in a frame rotating with the displacement operator $\mathcal{\hat{D}} = \otimes_i\D[i]{\alpha_i(t)}$, where $\D[i]{\alpha_i(t)}=\exp{\alpha_i(t)\Ad{i}-\alpha^*_i(t)\A{i}}$. Using the following relations
\begin{subequations}
\label{app:D_A_N}
\begin{gather}
    \Dd{\alpha}\A{}\D{\alpha} = \A{} + \alpha, \quad\quad \Dd{\alpha}\Ad{}\D{\alpha} = \Ad{} + \alpha^* \ ,\\
    \Dd{\alpha}\Ad{}\A{}\D{\alpha} = \Dd{\alpha}\Ad{}\D{\alpha}\Dd{\alpha}\A{}\D{\alpha}=\pqty{\Ad{} + \alpha^*}\pqty{\A{} + \alpha} \ ,
\end{gather}
\end{subequations}
we can write down the interaction picture Hamiltonian as
\begin{subequations}
\begin{align}
    &\hat{\tilde{H}} \equiv \mathcal{\hat{D}}^\dagger \hat{H} \mathcal{\hat{D}} -i\mathcal{\hat{D}}^\dagger\mathcal{\dot{\hat{D}}}=  \omega_\mm \Ad{\mm}\A{\mm} + \sum_{i=\ee,\oo}\Delta_i (\Ad{i}+\alpha^*_i)(\A{i}+\alpha_i) \\
    &\quad+g_i(\Ad{i}+\alpha^*_i)(\A{i}+\alpha_i)\hat{x}_\mm +\Omega_i(\Ad{i}+\alpha^*_i)+ \Omega^*_i(\A{i} + \alpha_i) \label{eq:G_interaction}\\
    &\quad +i\pqty{\dot{\alpha}_i^*\A{i}-\dot{\alpha}_i\Ad{i}} \label{eq:geometric}\ ,
\end{align}
\end{subequations}
where the time-dependence of $\alpha_i\pqty{t}$, $\Omega_i\pqty{t}$, $\alpha^*_i\pqty{t}$ and $\Omega^*_i\pqty{t}$ is omitted.
The geometric terms in line~\eqref{eq:geometric} (see Appendix~\ref{app:geometric_D} for details) can be further simplified by plugging
\begin{equation}
    \label{app:intra_cavity}
    \dot{\alpha}_i(t) = -i\Delta_i\alpha_i(t)-\frac{\kappa_i}{2}\alpha_i(t)-i\Omega_i(t) \ ,
\end{equation}
into the expression, which leads to
\begin{equation}
    \label{eq:geometric_simplified}
    i\pqty{\dot{\alpha}_i^*\A{i}-\dot{\alpha}_i\Ad{i}} = -\Delta_i\pqty{\alpha_i^*(t)\A{i}+\alpha_i(t)\Ad{i}}-i\frac{\kappa_i}{2}\alpha^*\pqty{t}\A{i}+i\frac{\kappa_i}{2}\alpha_i\pqty{t}\Ad{i} -\Omega_i\pqty{t}\Ad{i}-\Omega^*_i\pqty{t}\A{i} \ .
\end{equation}
Then we ignore the constant energy shift and linearize the Hamiltonian by ignoring the second order terms, i.e.,
\begin{equation}
    g_i(\Ad{i}+\alpha^*_i(t))(\A{i}+\alpha_i(t))\hat{x}_\mm \approx g_i(\alpha_i(t)\Ad{i}+\alpha_i^*(t)\A{i}) \hat{x}_\mm + g_i\abs{\alpha_i(t)}^2\hat{x}_\mm \ .
\end{equation}
Finally the time-dependent linearized Hamiltonian is given by
\begin{subequations}
\label{app:Hlin}
\begin{align}
    \hat{H}_{\mathrm{lin}} &= \omega_\mm \Ad{\mm} \A{\mm} + \sum_{i=\ee,\oo} \Delta_i \Ad{i} \A{i} +(G^*_i(t) \A{i} + G_i(t) \Ad{i}) \hat{x}_\mm + g_i|\alpha_i(t)|^2 \hat{x}_\mm \label{app:Hlin_s}\\
    &-i\frac{\kappa_i}{2}\alpha^*\pqty{t}\A{i}+i\frac{\kappa_i}{2}\alpha_i\pqty{t}\Ad{i} \label{app:kappa_Hlin}
\end{align}
\end{subequations}
with $G_i(t) = g_i\alpha_i\pqty{t}$. The open system effects $\kappa_i$ in line ~\eqref{app:kappa_Hlin} will be cancelled later in the quantum Langevin equation (discussed in Appendix~\ref{app:input_output}) so we will omit them when presenting the Hamiltonian.
\section{Geometric term generated by $\D{\alpha\pqty{t}}$}
\label{app:geometric_D}
In this section, we derive the geometric terms in line~\eqref{eq:geometric}. Because $\D{\alpha\pqty{t}}$ is a unitary matrix, it can be generated by a differential equation
\begin{equation}
    \label{app:geo_eq}
    \mathcal{\dot{D}}\pqty{t} = -i \hat{H}\pqty{t} \D{t}
\end{equation}
with an effective Hamiltonian
\begin{equation}
    \label{app:eff_H}
    \hat{H}\pqty{t}=i\bqty{\dot{\alpha}\pqty{t}\Ad{}-\dot{\alpha}^*\pqty{t}\A{}} +c\pqty{t}
\end{equation}
where $c\pqty{t}$ is a time-dependent c-number.
To prove this, we first show that the commutator $\comm{\hat{H}\pqty{t_1}}{\hat{H}\pqty{t_2}}$ is also a c-number
\begin{equation}
    \comm{\hat{H}\pqty{t_1}}{\hat{H}\pqty{t_2}} =- \comm{\dot{\alpha}\pqty{t_1}\Ad{}-\dot{\alpha}^*\pqty{t_1}\A{}}{\dot{\alpha}\pqty{t_2}\Ad{}-\dot{\alpha}^*\pqty{t_2}\A{}} = -\dot{\alpha}\pqty{t_1}\dot{\alpha}^*\pqty{t_2} + \dot{\alpha}^*\pqty{t_1}\dot{\alpha}\pqty{t_2} \ .
\end{equation}
Then we use the Magnus expansion to calculate the solution of Eq.~\eqref{app:geo_eq}
\begin{subequations}
\begin{align}
    T_+\exp{-i\int_0^\mathrm{T} \hat{H}\pqty{\tau}\dd{\tau}} &=\exp{-i\int_0^\mathrm{T}\hat{H}\pqty{\tau}\dd{\tau}-\frac{1}{2}\int_0^{t}\int_0^{t_1}\comm{\hat{H}\pqty{t_1}}{\hat{H}\pqty{t_2}}\dd{t_1}\dd{t_2}}\\
    &= \exp{\alpha\pqty{t}\Ad{}-\alpha^*\pqty{t}\A{}- \alpha\pqty{0}\Ad{}+\alpha^*\pqty{0}\A{}-f\pqty{t}} \label{app:geo_evo_f}\ .
\end{align}
\end{subequations}
If we choose $c\pqty{t}$ in Eq.~\eqref{app:eff_H} to be
\begin{equation}
    ic\pqty{t} = \frac{1}{2}\bqty{\dot{\alpha}\pqty{t}\alpha^*\pqty{t}-\dot{\alpha}^*\pqty{t}\alpha\pqty{t}} \ ,
\end{equation}
then $f\pqty{t}=0$ in Eq.~\eqref{app:geo_evo_f}.
It follows that the solution of Eq.~\eqref{app:geo_eq} is $\D{\alpha\pqty{t}}$:
\begin{equation}
    T_+\exp{-i\int_0^\mathrm{T} \hat{H}\pqty{\tau}\dd{\tau}} \hat{\mathcal{D}}\pqty{0}=\exp{\alpha\pqty{t}\Ad{}-\alpha^*\pqty{t}\A{}}\equiv \D{\alpha\pqty{t}} \ .
\end{equation}
As a result, the geometric term $-i\mathcal{\hat{D}}^\dagger\mathcal{\dot{\hat{D}}}$ becomes
\begin{equation}
    -i\mathcal{\hat{D}}^\dagger\mathcal{\dot{\hat{D}}} = -\mathcal{\hat{D}}^\dagger\hat{H}\pqty{t}\mathcal{\hat{D}} = i\bqty{\dot{\alpha}^*\pqty{t}\A{}-\dot{\alpha}\pqty{t}\Ad{}} \ ,
\end{equation}
where we ignored the terms that are proportional to identity.

\section{Input-output formalism in the interaction picture}
\label{app:input_output}
In this section, we re-derive the input-output formalism of quantum optics~\cite{Gardiner1985-dl} in the displacement frame. We assume each cavity mode in our configuration couples to a bosonic bath
$
    \hat{H}_\mathrm{B}=\sum_{i,k}\omega_{i,k}\Bd{i,k}\B{i,k} 
$
via the interaction Hamiltonian
\begin{equation}
    \label{eq:H_int}
    \hat{H}_{\mathrm{int}} = i\sum_{i,k} \gamma_{i, k}\pqty{\A{i}\Bd{i,k}e^{-i\omega_{li} t}-\B{i,k}e^{i \omega_{li}t}\Ad{i}}\ ,\quad i\in\bqty{\ee,\oo,\mm}
\end{equation}
in the interaction picture of the pump laser frequencies $\omega_{li}$. It follows $\omega_{l\mm}=0$ because there is no pump laser on the mechanical cavity.

After rotating the interaction Hamiltonian Eq.~\eqref{eq:H_int} with respect to the displacement operator $\mathcal{\hat{D}} = \otimes_i\D[i]{\alpha_i(t)}$ by replacing the annihilation and creation operators with
\begin{equation}
    \A{i}\to \A{i}+\alpha_i\pqty{t} \ , \quad \Ad{i}\to \Ad{i} + \alpha^*_i\pqty{t} \ ,
\end{equation}
we follow the standard procedure~\cite{Gardiner1985-dl} to derive the equation of motion. We start by writing down the equation of motion for $\A{i}\pqty{t}$ and $\B{i,k}\pqty{t}$
\begin{subequations}
\begin{align}
    \dot{\hat{a}}_i &= i\comm{\hat{H}_\mathrm{lin}}{\A{i}\pqty{t}}-\sum_{k}\gamma_{i,k}e^{i\omega_{li}t}\hat{b}_{i,k}\pqty{t} \label{app:int_a}\\
    \dot{\hat{b}}_{i,k} &= -i\omega_k \hat{b}_{i,k}\pqty{t}+\gamma_{i,k}e^{-i\omega_{li}t}\pqty{\A{i}\pqty{t}+\alpha_i\pqty{t}} \label{app:int_b}
    \ ,
\end{align}
\end{subequations}
where we omit the time-dependence of all the operators on the LHS.
The formal solution to Eq.~\eqref{app:int_b} is
\begin{equation}
    \label{app:int_b_sol}
    \hat{b}_{i,k} = e^{-i \omega_k \pqty{t-t_0}}\hat{b}_{i,k}\pqty{0} + \gamma_{i,k}\int_{t_0}^\mathrm{T} e^{-i \omega_k \pqty{t-\tau}} \pqty{\A{i}\pqty{\tau}+\alpha_i\pqty{\tau}}\dd{\tau} \ ,
\end{equation}
where $\hat{b}_{i,k}\pqty{0}=\hat{b}_{i,k}$.
Plugging Eq.~\eqref{app:int_b_sol} into Eq.~\eqref{app:int_a}, we have
\begin{equation}
    \label{app:2nd_order_a}
    \dot{\hat{a}}_i = i\comm{H_\mathrm{lin}}{\A{i}\pqty{t}} - \sum_k \gamma_{i,k}e^{i\omega_{li}t}e^{-i\omega_k\pqty{t-t_0}}\B{i,k} - \sum_k\gamma^2_{i,k}e^{i\omega_{li}t}\int_{t_0}^\mathrm{T} e^{-i \omega_k \pqty{t-\tau}} \pqty{\A{i}\pqty{\tau}+\alpha_i\pqty{\tau}}e^{-i\omega_{li}\tau}\dd{\tau} \ .
\end{equation}
Then we make the Markov approximation and replace the summation with an integral $\sum_k\gamma^2_{i,k}\to \gamma^2_i \int \dd{\omega} $. The last term in Eq.~\eqref{app:2nd_order_a} becomes
\begin{subequations}
\begin{align}
    \gamma^2_ie^{i\omega_{li}t}\int_{t_0}^\mathrm{T} \int_{-\infty}^{\infty} e^{-i\omega\pqty{t-\tau}}\dd{\omega}\pqty{\Ad{i}\pqty{\tau}+\alpha_i\pqty{\tau}}e^{-i\omega_{li}\tau}\dd{\tau} &=2\pi\gamma^2_ie^{i\omega_{li}t}\int_{t_0}^\mathrm{T} \delta\pqty{t-\tau}\pqty{\Ad{i}\pqty{\tau}+\alpha_i\pqty{\tau}}e^{-i\omega_{li}\tau}\dd{\tau} \\
    &= \frac{\kappa_i}{2} \pqty{\A{i}\pqty{t}+\alpha\pqty{t}} \label{app:a_p_alpha}\ ,
\end{align}
\end{subequations}
where $\kappa_i=2\pi\gamma^2_i$ are the total energy decay rate of cavity $i$. We notice that $\frac{\kappa_i}{2}\alpha_i\pqty{t}$ in line~\eqref{app:a_p_alpha} would cancel the same term in line~\eqref{app:kappa_Hlin}, so we will ignore it in later discussion. Using the standard definition of the \emph{input field operator}, Eq.~\eqref{app:2nd_order_a} becomes
\begin{equation}
    \label{app:a_equation}
    \dot{\hat{a}}_i = i\comm{\hat{H}_\mathrm{lin}}{\A{i}\pqty{t}} - \frac{\kappa_i}{2}\A{i}\pqty{t} + \sqrt{\kappa_i}\A{i,\mathrm{in}}\pqty{t}\ ,
\end{equation}
where
\begin{equation}
    \label{app:input_form}
    \A{i,\mathrm{in}}\pqty{t} = -\frac{1}{\sqrt{2\pi}}\sum_k e^{-i\pqty{\omega_k-\omega_{li}}t+i\omega_k t_0}\B{i, k} \ .
\end{equation}
Without loss of generality, we will set $t_0=0$ henceforth. Eq.~\eqref{app:input_form} can be written in the continuous limit
\begin{equation}
    \A{i,\mathrm{in}}\pqty{t} = -\frac{1}{\sqrt{2\pi}}\int_{-\infty}^{\infty} e^{-i\omega  t}\B{i}\pqty{\w} \dd{\omega}
\end{equation}
by shifting the central frequency $\w=\w_k-\w_{li}$ and replacing the summation with integral. Its Fourier transform is given by
\begin{equation}
    \label{app:a_F}
    \A{i,\iin}\pqty{\w}=\int_{-\infty}^{\infty} \A{i,\iin}\pqty{t} e^{i \w t} \dd{t} =-\sqrt{2\pi}\B{i}\pqty{\w} \ .
\end{equation}
The commutation relations of the time-domain and frequency-domain operators are
\begin{equation}
    \comm{\A{i,\iin}\pqty{t}}{\Ad{i,\iin}\pqty{t'}} = \delta\pqty{t-t'},\quad \comm{\hat{b}_{i}\pqty{\w}}{\hat{b}^\dagger_{i}\pqty{w'}} = \delta\pqty{\w-\w'} \ .
\end{equation}

Before proceeding, we make three additional comments on the input field operator. First, we emphasize that the Fourier transform and the Hermitian conjugation do not commute, i.e., $\mathcal{F}\bqty{\A{i,\iin}\pqty{t}}^\dagger \neq \mathcal{F}\bqty{\Ad{i,\iin}\pqty{t}}$. In this paper, we use the notation
\begin{equation}
    \label{app:dagger_defination}
    \Ad{i,\iin}\pqty{\w} \equiv \mathcal{F}\bqty{\Ad{i,\iin}\pqty{t}}\pqty{\w}, \quad \A{i,\iin}\pqty{\w}^\dagger \equiv \Bqty{\mathcal{F}\bqty{\A{i,\iin}\pqty{t}}\pqty{\w}}^\dagger \ .
\end{equation}
It is straightforward to check $\A{i,\iin}\pqty{\w}^\dagger = \Ad{i,\iin}\pqty{-\w}$ from Eq.~\eqref{app:a_F}. Second, for the electrical and optical input field operators, the negative frequency should be interpreted as, in the lab frame, the frequency smaller than the corresponding pump laser frequency. However, because the mechanical mode is not rotated in the pump laser frame, its input field operator does not have any negative frequency component, i.e., $\A{\mm,\iin}\pqty{\w} = 0$ for $\w < 0$. Based on the first comment, we also have $\Ad{\mm,\iin}\pqty{\w}=0$ for $\w > 0$. Third, the following commutation relation is re-normalized by an additional $2\pi$ factor
\begin{equation}
     \label{app:input_normalization}
    \comm{\A{i,\iin}\pqty{\w}}{\A{i,\iin}\pqty{\w'}^\dagger}=2\pi \delta\pqty{\w-\w'} \ .
\end{equation}

We note that the input field operator can be split into multiple parts depending on how we group the bath modes
\begin{equation}
    \label{app:multi-input}
    \dot{\hat{a}}_i=i\comm{\hat{H}_\mathrm{S}}{\A{i}} - \frac{\kappa_i}{2}\A{i}+\sum_k\sqrt{\kappa_{ik}}\A{ik,\mathrm{in}} \ ,
\end{equation}
where $\kappa_i = \sum_k \kappa_{ik}$. Eq.~\eqref{app:multi-input} is usually referred to as the \emph{quantum Langevin equation} in the literature~\cite{Walls2008-ks}.
The same procedure can be applied to derive a similar equation with output field operators
\begin{equation}
    \label{app:multi-output}
    \dot{\hat{a}}_i=i\comm{\hat{H}_\mathrm{S}}{\A{i}} + \frac{\kappa_i}{2}\A{i}-\sum_k\sqrt{\kappa_{ik}}\A{ik,\mathrm{out}} \ ,
\end{equation}
and the resulting equation is known as the \emph{time-reversed quantum Langevin equation}. The input/output field operators are often called \emph{input/output (ports)} in the literature in analogy to network theory. $\kappa_{ik}$ are called the \emph{coupling strengths} to the $k$th input/output of cavity $i$ and the summation of the coupling strengths to all the inputs/outputs equals the total energy decaying rate.
The fundamental relation between the inputs and outputs can be obtained by combining Eq.~\eqref{app:multi-input} and~\eqref{app:multi-output}
\begin{equation}
    \sum_k \big[ \sqrt{\kappa_{ik}}\A{ik,\mathrm{out}}\pqty{t} + \sqrt{\kappa_{ik}}\A{ik,\mathrm{in}}\pqty{t} \big]= \kappa_i\A{i}\pqty{t} \ .
\end{equation}
In practice, stronger conditions
\begin{equation}
    \label{app:multi_input_output}
    \sqrt{\kappa_{ik}}\A{ik,\mathrm{out}}\pqty{t} + \sqrt{\kappa_{ik}}\A{ik,\mathrm{in}}\pqty{t} = \kappa_{ik}\A{i}\pqty{t}  , \quad \forall k
\end{equation}
are enforced.
\section{Equation of motion}
\label{app:eom}
In this section, we write down a compact matrix form of the quantum Langevin equation (re-derived in Appendix.~\ref{app:input_output}) in the displacement frame. Plugging the transducer Hamiltonian (Eq.~\eqref{app:Hlin_s}) into the quantum Langevin equation (Eq.~\eqref{app:multi-input}), we have:
\begin{subequations}
\label{app:heisenberg_eom}
\begin{align}
    \dot{a}_\mm &= -(i\omega_\mm +\frac{\kappa_\mm}{2})a_\mm - iG^*_\oo(t)a_\oo - i G_\oo(t)\Ad{\oo} - iG^*_\ee(t)a_\ee - i G_\ee(t)\Ad{\ee} - i|G_\oo(t)|^2 - i|G_\ee(t)|^2 + \sum_k \sqrt{\kappa_{\mathrm{m}k}}a_{\mm k,\mathrm{in}} \label{eq:eom_b}\\
    \dot{a}_\mm^\dagger &= (i\omega_\mm -\frac{\kappa_\mm}{2})a_\mm^\dagger + iG^*_\oo(t)a_\oo + i G_\oo(t)\Ad{\oo} + iG^*_\ee(t)a_\ee + i G_\ee(t)\Ad{\ee} + i|G_\oo(t)|^2 + i|G_\ee(t)|^2+\sum_k \sqrt{\kappa_{\mathrm{m}k}}a_{\mm k,\mathrm{in}}^\dagger \label{eq:eom_bd}\\
    \dot{a}_\oo &= -i\Delta_\oo a_\oo -\frac{\kappa_\oo}{2} a_\oo - iG_\oo(t)\pqty{a_{\mm}+a^\dagger_{\mm}} + \sum_k\sqrt{\kappa_{\mathrm{o}k}}\Ain{\oo k}\label{eq:eom_ao}\\
    \dot{a}^\dagger_\oo &=i\Delta_\oo \Ad{\oo} -\frac{\kappa_\oo}{2} \Ad{\oo}+ iG^*_\oo(t)\pqty{a_{\mm}+a^\dagger_{\mm}}+\sum_k\sqrt{\kappa_{\mathrm{o}k}}\Adin{\oo k}\label{eq:eom_aod}\\
    \dot{a}_\ee &=-i\Delta_\ee a_\ee -\frac{\kappa_\ee}{2} a_\ee - iG_\ee(t)\pqty{a_{\mm}+a^\dagger_{\mm}}+\sum_k \sqrt{\kappa_{\mathrm{e}k}}\Ain{\ee k}\label{eq:eom_ae}\\
    \dot{a}^\dagger_\ee & = i\Delta_\ee \Ad{\ee} -\frac{\kappa_\ee}{2} \Ad{\ee}+ iG^*_\ee(t)\pqty{a_{\mm}+a^\dagger_{\mm}}+\sum_k\sqrt{\kappa_{\mathrm{e}k}}\Adin{\ee k} \label{eq:eom_aed}\ ,
\end{align}
\end{subequations}
where we omit the hat symbol for the operators. Eqs.~\eqref{app:heisenberg_eom} can be rewritten into a matrix form by defining the \emph{state vector} as
\begin{equation}
    \label{app:state_space}
    \mathbf{a}(t)=\bqty{\A{\oo}\pqty{t}, \A{\ee}\pqty{t}, \Ad{\oo}\pqty{t}, \Ad{\ee}\pqty{t}, \A{\mm}\pqty{t}, \Ad{\mm}\pqty{t}}^\mathrm{T} \ ,
\end{equation}
and the \emph{input} and \emph{output vectors} as
\begin{subequations}
\label{app:input_output_vec}
\begin{align}
    \mathbf{a}_{\iin}\pqty{t} &= \Big[\hat{a}_{\oo1,\iin}\pqty{t},\ldots,\hat{a}_{\oo k_{\oo},\iin}\pqty{t}, \hat{a}_{\ee1,\iin}\pqty{t},\ldots,\hat{a}_{\ee k_{\ee},\iin}\pqty{t},\hat{a}^\dagger_{\oo1,\iin}\pqty{t},\ldots, \hat{a}^\dagger_{\ee1,\iin}\pqty{t},\ldots \notag\\
    &\quad\quad \hat{a}_{\mm 1, \iin}\pqty{t},\ldots,\hat{a}_{\mm k_{\mm}, \iin}\pqty{t},\hat{a}^\dagger_{\mathrm{\mm 1, in}}\pqty{t},\ldots,\hat{a}^\dagger_{\mm k_{\mm},\mathrm{in}}\pqty{t} \Big]^\mathrm{T} \\
    \mathbf{a}_{\out}\pqty{t} &= \Big[\A{\oo 1,\out}\pqty{t},\ldots,\A{\oo k_\oo, \out}, \A{\ee 1,\out}\pqty{t},\ldots,\Ad{\ee k_\ee, \out}\pqty{t},\Ad{\oo 1,\out}\pqty{t},\ldots , \Ad{\ee 1,\iin}\pqty{t},\ldots \notag\\
    &\quad\quad\A{\mm 1, \iin}\pqty{t},\ldots,\A{\mm k_{\mm},\out},\Ad{\mm 1,\out}\pqty{t}\ldots,\Ad{\mm k_{\mm},\out}\pqty{t}\Big]^\mathrm{T}   
    \ .    
\end{align}
\end{subequations}
The compact matrix form of Eqs.~\eqref{app:heisenberg_eom} is
\begin{equation}
    \label{app:eom_matrix}
    \mathbf{\dot{a}}(t) = \Ab(t)\mathbf{a}(t) + \Bb\mathbf{a}_{\mathrm{in}}\pqty{t} + \mathbf{v}\pqty{t} \ ,
\end{equation}
where
\begin{subequations}
\begin{align}
\Ab(t) &=  
    \begin{pmatrix}
        -i\Delta_\oo - \frac{\kappa_\oo}{2} & 0 & 0 & 0&-iG_\oo\pqty{t}&-iG_\oo\pqty{t}  \\
        0 & -i \Delta_\ee - \frac{\kappa_\ee}{2} & 0 & 0&-i G_\ee\pqty{t}&-i G_\ee\pqty{t}  \\
        0&0&i\Delta_\oo-\frac{\kappa_\oo}{2}&0&iG^*_\oo\pqty{t}&iG^*_\oo\pqty{t}\\
        0&0&0&i\Delta_e-\frac{\kappa_\ee}{2}&iG^*_\ee\pqty{t}&iG^*_\ee\pqty{t}  \\
        -iG^*_\oo\pqty{t} & -i G^*_\ee\pqty{t} & -iG_\oo\pqty{t} &-iG_\ee\pqty{t}&-i\omega_\mm-\frac{\kappa_\mm}{2}& 0  \\
        iG^*_\oo\pqty{t} & i G^*_\ee\pqty{t} & iG_\oo\pqty{t} &iG_\ee\pqty{t}&0& i\omega_\mm-\frac{\kappa_\mm}{2} 
    \end{pmatrix} \label{app:At}\\
    \Bb &=\begin{pmatrix}
        \mathbf{D}& 0 & 0 \\
        0 & \mathbf{D} & 0\\
        0 & 0 & \mathbf{M}\\
    \end{pmatrix} \\
    \mathbf{D} &=\begin{pmatrix}
        \sqrt{\kappa_{\oo 1}} &\cdots& \sqrt{\kappa_{\oo k_{\oo}}} & 0 &\cdots & 0  \\
        0 & \cdots & 0 & \sqrt{\kappa_{\ee 1}} &\cdots& \sqrt{\kappa_{\ee k_{\ee}} }
    \end{pmatrix}\\
        \mathbf{M} &=\begin{pmatrix}
        \sqrt{\kappa_{\mm 1}} &\cdots& \sqrt{\kappa_{\mm k_{\mm}}} & 0 &\cdots & 0  \\
        0 & \cdots & 0 & \sqrt{\kappa_{\mm 1}} &\cdots& \sqrt{\kappa_{\mm k_{\mm}} }
    \end{pmatrix}\\
    \mathbf{v}\pqty{t} &= \bigg[0,
    0,0,
    0,-ig_\oo\abs{\alpha_\oo\pqty{t}}^2 - ig_\oo\abs{\alpha_\ee\pqty{t}}^2,ig_\oo\abs{\alpha_\oo\pqty{t}}^2 + ig_\ee\abs{\alpha_\ee\pqty{t}}^2
    \bigg]^\mathrm{T}\ .\label{app:v}
\end{align}
\end{subequations}
Additionally the input-output relation ( Eq.~\eqref{app:multi_input_output}) has a matrix form given by
\begin{equation}
 \label{app:input_output_matrix}
 \mathbf{a}_{\mathrm{out}}\pqty{t} + \mathbf{a}_{\mathrm{in}}\pqty{t} = \mathbf{B}^\mathrm{T}\mathbf{a}\pqty{t} \ .
\end{equation}

For convenience, we will denote the first input and output of each cavity as $\A{i,\iin}\pqty{t}$ and $\A{i,\out}\pqty{t}$. Furthermore, we assume $\A{\oo,\iin}\pqty{\w}$ and $\A{\oo,\out}\pqty{\w}$ ( $\A{\ee,\iin}\pqty{\w}$ and $\A{\ee,\out}\pqty{\w}$) represent the incident and outgoing signal photons on the optical (electrical) cavity and call them the \emph{external inputs} and \emph{outputs}, respectively. The corresponding coupling strengths are called \emph{external coupling strengths} and denoted by $\kappa_{i,\mathrm{ex}}$. The rest of inputs/outputs are assumed to be the incident noise/internal loss which we don't have control over. Unless otherwise specified, we will assume all the noise inputs to be vacuum fields.

\section{Transfer matrix}\label{app:freq_domain}
In this section, we calculate the transfer matrix of the optomechanical quantum transducer. Without loss of generality, we will assume the task is to transduce one photon from the external input of the electrical cavity to the external output of the optical cavity.

\subsection{Constant driving}
\label{app:freq_const}
If the control signals $\Omega_i\pqty{t}$ in Eq.~\eqref{app:intra_cavity} are time-independent, the steady state solutions of $\alpha_i\pqty{t}$ are also constants
\begin{equation}
    \label{app:steady_const}
    \alpha^{\mathrm{s}}_i\pqty{t} = \frac{2\Omega_i}{i\kappa_i-2\Delta_i} \ .
\end{equation}
Then it is most convenient to solve the equation of motion Eq.~\eqref{app:eom_matrix} in the frequency domain assuming $\Ab\pqty{t}$ takes its steady state value. Using the convention of $\int f\pqty{t} e^{i\omega t}\dd{t}$ for the Fourier transform, we write Eq.~\eqref{app:eom_matrix} in the frequency domain
\begin{equation}
    \label{app:eom_f_const}
    -i\omega \mathbf{a}\pqty{\omega} = \Ab \mathbf{a}\pqty{\omega} + \Bb \mathbf{a}_{\mathrm{in}}\pqty{\omega} \ .
\end{equation}
$\mathbf{v}\pqty{t}$ is ignored because its steady state contributes only to the zero frequency.
Combing Eq.~\eqref{app:input_output_matrix} and~\eqref{app:eom_f_const}, we obtain the transfer matrix $\mathbf{T}\pqty{\omega}$
\begin{equation}
    \mathbf{a}_{\mathrm{out}}\pqty{\omega} = \bqty{\Bb^{\mathrm{T}}\pqty{-i\omega\II - A}^{-1} \Bb- \II}\mathbf{a}_{\mathrm{in}}\pqty{\omega} \equiv \TT\pqty{\omega} \mathbf{a}_{\mathrm{in}}\pqty{\omega} \ .
\end{equation}
We denote the element of the transfer matrix $\mathbf{T}\pqty{\omega}$ using the scattering notation
\begin{equation}
    \A{\oo,\out}\pqty{\omega} = \TT_{\oo, \ee}\pqty{\omega} \A{\ee,\iin}\pqty{\omega} \ ,  \quad\A{\oo,\out}\pqty{\omega} = \TT_{\oo, \ee'}\pqty{\omega} \Ad{\ee,\iin}\pqty{\omega} \ ,\quad\Ad{\oo,\out}\pqty{\omega} = \TT_{\oo', \ee}\pqty{\omega} \A{\ee,\iin}\pqty{\omega} \ ,
\end{equation}
where the prime symbol in the subscript means the corresponding operator is an annihilation operator.
In a typical experimental setup, we would have $\Delta_\oo=\Delta_\ee=\omega_\mm$ and the transmission coefficient at $\omega_\mm$ is
\begin{equation}
    \TT_{\oo, \ee}\pqty{\omega_\mm}= \frac{32\omega_\mm\sqrt{\kappa_{\oo,\mathrm{ex}}}\sqrt{\kappa_{\ee,\mathrm{ex}}}\pqty{\kappa_\ee-4i\omega_\mm}\pqty{\kappa_\oo-4i\omega_\mm}G^*_\ee G_\oo}{64\omega^2_\mm\kappa_\oo\pqty{i\kappa_\oo+4\omega_\mm}\abs{G_\ee}^2+\kappa_\ee\pqty{i\kappa_\ee+4\omega_\mm}\pqty{-\kappa_\mm\kappa_\oo\pqty{\kappa_\mm-4i\omega_\mm}\pqty{\kappa_\oo-4i\omega_\mm}+64\omega^2_\mm\abs{G_\oo}^2}} \ ,
\end{equation}
where $G_\oo$ and $G_\ee$ all take their steady state values
\begin{equation}
    G_i = g_i\alpha^\mathrm{s}_i = \frac{2g_i\Omega_i}{i\kappa_i-2\w_\mm} \ .
\end{equation}

If the cavities are ideal and symmetric, i.e., $\kappa_{\ee,\mathrm{ex}}=\kappa_{\oo,\mathrm{ex}}=\kappa_\ee = \kappa_\oo=\kappa$, $g_\ee=g_\oo$, and $\Omega_\oo=\Omega_\ee$, the transmission coefficient and transduction efficiency becomes
\begin{equation}
    \label{app:const_Toe}
    \TT_{\oo, \ee}\pqty{\omega_\mm}=-1-\frac{ i\kappa}{4\w_\mm} \ , \quad \eta\pqty{\w_\mm} \equiv \abs{\TT_{\oo, \ee}\pqty{\omega_\mm}} = \sqrt{1+\frac{\kappa^2}{16\w^2_\mm}} \ .
\end{equation}
%and the transduction efficiency is
%\begin{equation}
%    \eta\pqty{\w_\mm} \equiv \abs{\TT_{\oo, %\ee}\pqty{\omega_\mm}} = \sqrt{1+\frac{\kappa^2}{16\w^2_\mm}} \ .
%\end{equation}
Other elements of the transfer matrix connecting $\A{\oo,\out}\pqty{\w_\mm}$ to the input operators are
\begin{equation}
    \label{app:const_T_other}
    \TT_{\oo,\oo}\pqty{\w_\mm}=-\frac{ i\kappa}{4\w_\mm},\quad \TT_{\oo,\oo'}\pqty{\w_\mm} = \TT_{\oo,\ee'}\pqty{\w_\mm} = -\frac{ i\kappa G}{4\w_\mm G^*} \ .
\end{equation}
From Eq.~\eqref{app:const_Toe} and~\eqref{app:const_T_other}, we can see the noiseless transduction can only be achieved in the resolved sideband limit $4\w_\mm \gg \kappa$.
\subsection{Oscillating drive}
\label{app:freq_osc}
If the control signals in Eq.~\eqref{app:intra_cavity} are oscillating functions $\Omega_i\pqty{t} = \Omega_i e^{-2i\omega_\mm t}$, the steady state solutions of $\alpha_i\pqty{t}$ are also single frequency oscillating functions
\begin{equation}
    \label{app:steady_alpha}
    \alpha^\mathrm{s}_i\pqty{t} = \frac{2\Omega_i}{4\omega_\mm-2\Delta_i+i\kappa_i}e^{-2i\omega_\mm t}\equiv \Omega^\mathrm{s}_i e^{-2i\omega_\mm t} \ .
\end{equation}
As a result, the steady state $A(t)$ matrix (Eq.~\eqref{app:At}) can be written as a Fourier series
\begin{equation}
    \Ab\pqty{t} = \AD + \Am e^{-2i\omega_\mm t} + \Ap e^{2i\omega_\mm t} \ ,
\end{equation}
where
\begin{subequations}
\begin{align}
    \AD &=  
    \begin{pmatrix}
        -i\Delta_\oo - \frac{\kappa_\oo}{2} & 0 & 0 & 0&0&0  \\
        0 & -i \Delta_\ee - \frac{\kappa_\ee}{2} & 0 & 0&0&0  \\
        0&0&i\Delta_\oo-\frac{\kappa_\oo}{2}&0&0&0 \\
        0&0&0&i\Delta_e-\frac{\kappa_\ee}{2}&0&0  \\
        0 & 0 & 0 &0&-i\omega_\mm-\frac{\kappa_\mm}{2}& 0  \\
        0 & 0 & 0 &0&0& i\omega_\mm-\frac{\kappa_\mm}{2} 
    \end{pmatrix} \\
    \Am &=  
    \begin{pmatrix}
        0 & 0 & 0 & 0&-iG_\oo&-iG_\oo  \\
        0 & 0 & 0 & 0&-i G_\ee&-i G_\ee  \\
        0&0&0&0&0&0\\
        0&0&0&0&0&0  \\
        0 & 0 & -iG_\oo & -iG_\ee &0& 0  \\
        0 & 0 & iG_\oo & iG_\ee &0& 0
    \end{pmatrix} \label{app:Am}\\
    \Ap &=  
    \begin{pmatrix}
        0 & 0 & 0 & 0&0&0  \\
        0 & 0 & 0 & 0&0&0  \\
        0&0&0&0&iG^*_\oo&iG^*_\oo\\
        0&0&0&0&iG^*_\ee&iG^*_\ee \\
        -iG^*_\oo & -i G^*_\ee & 0 &0&0& 0  \\
        iG^*_\oo & i G^*_\ee & 0 &0&0& 0
    \end{pmatrix} \ , \label{app:Ap}\\
\end{align}
\end{subequations}
and $G_\oo$, $G_\ee$ take their corresponding steady state amplitudes
\begin{equation}
    G_i = g_i \Omega^\mathrm{s}_i = \frac{2g_i\Omega_i}{4\omega_\mm-2\Delta_i+i\kappa_i} \ .
\end{equation}

The frequency domain equation of motion can then be written as 
\begin{equation}
    \label{app:shifted_eoq}
    -i\omega \mathbf{a}\pqty{\omega} = \AD\mathbf{a}\pqty{\omega}+\Am\mathbf{a}\pqty{\omega-2\omega_\mm} +\Ap\mathbf{a}\pqty{\omega+2\omega_\mm} + \Bb\mathbf{a}_{\mathrm{in}}\pqty{\omega} \ ,
\end{equation}
where again we ignore $\mathbf{v}\pqty{t}$. To obtain a solvable system, we plug the frequencies $\omega+2k\omega_\mm$ into Eq.~\eqref{app:shifted_eoq} and assume $\mathbf{a}_{\mathrm{in}}\pqty{\omega}$ is zero at all the sidebands
\begin{subequations}
\label{app:eom_expaned}
\begin{align}
    & \quad\vdots\notag \\
    -i\pqty{\omega-2N\omega_\mm} \mathbf{a}\pqty{\omega-2N\omega_\mm} &= \AD\mathbf{a}\pqty{\omega-2N\omega_\mm}+\Am\mathbf{a}\pqty{\omega-2\pqty{N+1}\omega_\mm} +\Ap\mathbf{a}\pqty{\omega-2\pqty{N-1}\omega_\mm} \label{app:upper_boundary}\\
    &\quad \vdots \notag\\
    -i\omega \mathbf{a}\pqty{\omega} &= \AD\mathbf{a}\pqty{\omega}+\Am\mathbf{a}\pqty{\omega-2\omega_\mm} +\Ap\mathbf{a}\pqty{\omega+2\omega_\mm} + B\mathbf{a}_{\mathrm{in}}\pqty{\omega} \label{app:central_band}\\
    &\quad \vdots\notag\\
    -i\pqty{\omega+2N\omega_\mm} \mathbf{a}\pqty{\omega+2N\omega_\mm} &= \AD\mathbf{a}\pqty{\omega+2N\omega_\mm}+\Am\mathbf{a}\pqty{\omega+2\pqty{N-1}\omega_\mm} +\Ap\mathbf{a}\pqty{\omega+2\pqty{N+1}\omega_\mm} \label{app:lower_boundary}\\
    &\quad \vdots\notag
\end{align}
\end{subequations}
Then we truncate Eq.~\eqref{app:eom_expaned} by assuming $
    \mathbf{a}\pqty{\omega+2k\omega_\mm} = 0
$ for all $k<-N$ and $k>N$. The resulting equations can be solved iteratively. For example, starting from the boundary equation~\eqref{app:upper_boundary}, we can solve for $\mathbf{a}\pqty{\omega-2N\omega_\mm}$ in terms of $\mathbf{a}\pqty{\omega-2\pqty{N-1}\omega_\mm}$, resulting in
\begin{equation}
    \mathbf{a}\pqty{\omega-2N\omega_\mm} = \XX{N}{-}\Ap\mathbf{a}\pqty{\omega-2\pqty{N-1}\omega_\mm} \ ,
\end{equation}
where $\XX{N}{-} = \bqty{-i\pqty{\omega-2N\omega_\mm}\II-\AD}^{-1}$.
The solution can be plugged into the equation of motion at frequency $\omega - 2(N-1)\omega_\mm$ to obtain
\begin{equation}
    \mathbf{a}\pqty{\omega-2\pqty{N-1}\omega_\mm} =\XX{N-1}{-} \Ap\mathbf{a}\pqty{\omega-2\pqty{N-2}\omega_\mm} \ ,
\end{equation}
where
\begin{equation}
    \XX{N-1}{-}=\bqty{ -i\pqty{\omega-2\pqty{N-1}\omega_\mm}\II-\AD-\Xm^{\bqty{N}}}^{-1} \ ,\quad
    %\Xm^{\bqty{N}} &= \Am\bqty{-i\pqty{\omega-2N\omega_\mm}\II-\AD}^{-1}\Ap \ .
    \Xm^{\bqty{N}} = \Am \XX{N}{-} \Ap \ .
\end{equation}
By applying the above procedure iteratively, we end up with an recursive relation
\begin{equation}
    \mathbf{a}\pqty{\omega-2k\omega_\mm} = \XX{k}{-}\Ap\mathbf{a}\pqty{\omega-2\pqty{k-1}\omega_\mm} \ ,  
\end{equation}
where
\begin{equation}
    \XX{k}{-}=\bqty{-i\pqty{\omega-2k\omega_\mm}\II-\AD-\Xm^{\bqty{k+1}}}^{-1}, \quad \Xm^{\bqty{k+1}}=\Am \XX{k+1}{-} \Ap \ .
\end{equation}
A similar recursive relation can be obtained by starting from the other boundary Eq.~\eqref{app:lower_boundary}
\begin{equation}
    \mathbf{a}\pqty{\omega+2k\omega_\mm}=\XX{k}{+}\Am\mathbf{a}\pqty{\omega+2\pqty{k-1}\omega_\mm} \ ,
\end{equation}
where
\begin{equation}
    \XX{k}{+}=\bqty{-i\pqty{\omega+2k\omega_\mm}\II-\AD-\Xp^{\bqty{k+1}}}^{-1} ,  \quad \Xp^{\bqty{k+1}} = \Ap\XX{k+1}{+}\Am \ .
\end{equation}
Setting $k=1$, we have
$
    \mathbf{a}\pqty{\omega\mp2\omega_\mm}=\XX{1}{\mp}\mathbf{A}_{\pm}\mathbf{a}\pqty{\omega}
$. The central band Eq.~\eqref{app:central_band} can then be rewritten as 
\begin{equation}
    -i\omega \mathbf{a}\pqty{\omega} = \Bqty{\AD+\XI{1}{-}\pqty{\omega}+\XI{1}{+}\pqty{\omega}}\mathbf{a}\pqty{\omega}+ \Bb\mathbf{a}_{\mathrm{in}}\pqty{\omega} \ ,
\end{equation}
from which a transfer function can be obtained
\begin{equation}
    \label{app:osc_transfer}
    \mathbf{a}_{\mathrm{out}}\pqty{\w} = \bqty{\Bb^\mathrm{T} \pqty{-i\omega\II - \AD-\XI{1}{+}\pqty{\w}-\XI{1}{-}\pqty{\w}}^{-1} \Bb- \II}\mathbf{a}_{\mathrm{in}}\pqty{\w}\equiv \TT\pqty{\omega} \mathbf{a}_{\mathrm{in}}\pqty{\w} \ .
\end{equation}
For simplicity, we write the transfer matrix as $\TT\pqty{\omega}= \Bb^{\mathrm{T}} \mathbf{X} \Bb- \II$ by defining the shorthand notation $\mathbf{X}=\pqty{-i\omega\II - \AD-\XI{1}{-}\pqty{\w}-\XI{1}{+}\pqty{\w}}^{-1}$.

The transfer matrix from inputs at sideband frequencies $\mathbf{a}_{
\iin}\pqty{\w+2k\w_\mm}$ can also be iteratively calculated using the same procedure
%To calculate the second order moments, we will need to keep the input mode operators from other sidebands (see Eq.~\eqref{app:upper_boundary} and~\eqref{app:lower_boundary}). One can use the same procedure described above to iteratively calculate the transfer function from other si
\begin{equation}
    \label{app:sidebands_input_coupling}
    \mathbf{a}_{\mathrm{out}}\pqty{\w} = \TT\pqty{\omega} \mathbf{a}_{\mathrm{in}}\pqty{\w} + \sum_{k=1}^N \bqty{\TT_+^{[k]}\pqty{\w} \mathbf{a}_{\iin}\pqty{\omega + 2k\w_\mm} + \TT_-^{[k]}\pqty{\w}\mathbf{a}_{\iin}\pqty{\omega - 2k\w_\mm}} \ ,
\end{equation}
where
\begin{equation}
    \label{app:Tk}
    \TT_{\pm}^{\bqty{k}}= \Bb^\mathrm{T}\mathbf{X}\prod_{i=1}^{k} \pqty{\Ab_\pm\XX{i}{\pm}}\mathbf{B} \ .
\end{equation}
%and $\mathbf{B}_{\pm k}$ is the coupling matrix to inputs at sideband frequencies $\w\pm2k\w_\mm$. 
We prove in Sec.~\ref{sec:proof} of the main text that $\TT_\pm^{\bqty{k}}=0$ for $k>2$.

%It is the result of the Markov approximation that the input at the sideband frequencies couples to the system through the same coupling matrix $\Bb$. 
%In principle, $\Bb$ has an implicit frequency dependence and we can define a shorthand notation $\Bb_{\pm k} = \Bb\pqty{\w \pm 2k\w_\mm}$ for the couplings to different sidebands. For a system with generic inputs (Eq.~\eqref{app:generic_input}),
%%$
% \mathbf{a}_{\mathrm{out}}\pqty{t} + \mathbf{C}\mathbf{a}_{\mathrm{in}}\pqty{t} = \mathbf{D}\mathbf{a}\pqty{t} 
%$, 
%the most general forms of the transfer functions are
%\begin{equation}
    \label{app:general_T_osc}
%   \TT\pqty{\omega}= \mathbf{K} \XX \Bb- \mathbf{C},\quad  \TT_{\pm}^{\bqty{k}}= \mathbf{K}\XX\prod_{i=1}^{k} \pqty{\Ab_\pm\XX_\pm^{\bqty{i}}}\mathbf{B}_{\pm k} \ .
%\end{equation}

Again, in the limit of $\Delta_\oo=\Delta_\ee=\omega_\mm$, the transfer function of $N=1$ is given by
\begin{equation}
    \TT_{\oo,\ee}\pqty{\omega_\mm}=\frac{32i\omega_\mm\sqrt{\kappa_{\oo,\mathrm{ex}}}\sqrt{\kappa_{\ee,\mathrm{ex}}}G^*_\ee G_\oo}{-16i\omega_\mm\kappa_\oo\abs{G_\ee}^2+\kappa_\ee\pqty{\kappa_\mm\kappa_\oo\pqty{\kappa_\mm+4i\omega_\mm}-16i\omega_\mm\abs{G_\oo}^2}} \ .
\end{equation}
If the cavities are symmetric, i.e., $\kappa_{\ee,\mathrm{ex}}=\kappa_{\oo,\mathrm{ex}}=\kappa_\ee = \kappa_\oo=\kappa$, $g_\ee=g_\oo$, and $\Omega_\oo=\Omega_\ee$, the transmission coefficient and transduction efficiency become
\begin{equation}
    \label{app:N1}
    \TT_{\oo, \ee}\pqty{\omega_\mm}=-1 ,\quad \eta\pqty{\w_\mm} \equiv \abs{\TT_{\oo, \ee}\pqty{\omega_\mm}} = 1 \ .
\end{equation}
Eq.~\eqref{app:N1} corresponds to the perfect transduction. The expressions for other quantities can be obtained using~\cite{notebook}. 
\section{Added noise}
\label{app:added_noise}
In addition to the transduction efficiency, the added noise is another important metric for the bosonic transducer~\cite{Zeuthen2020-sr}. In this section we derive an explicit expression and a lower bound for the added noise. The total noise coming out from the optical cavity is defined as~\cite{Lau2020-si,Caves1982-qx}
\begin{equation}
    \label{app:add_noise_def}
    2\pi\eta^2\pqty{\omega}S_\mathrm{t}\pqty{\omega}\delta\pqty{\w- \w'}=\frac{1}{2}\expval{\acomm{\A{\oo,\out}\pqty{\w'}^\dagger}{\A{\oo,\out}\pqty{\w}}}-\Big\langle\A{\oo,\out}\pqty{\omega}\Big\rangle\Big\langle\A{\oo,\out}\pqty{\omega'}^\dagger\Big\rangle \ ,
\end{equation}
%\begin{equation}
%    \label{app:add_noise_def}
%    2\pi\eta^2\pqty{\omega}S_\mathrm{t}\pqty{\omega}\delta\pqty{\w- \w'}=\frac{1}{2}\expval{\A{\oo,\out}\pqty{\w'}^\dagger\A{\oo,\out}\pqty{\w}+\A{\oo,\out}\pqty{\w}\A{\oo,\out}\pqty{\w'}^\dagger}-\Big\langle\A{\oo,\out}\pqty{\omega}\Big\rangle\Big\langle\A{\oo,\out}\pqty{\omega'}^\dagger\Big\rangle \ ,
%\end{equation}
where the average $\expval{\cdot}$ is over the initial state of all the inputs and outputs.
%we explicitly write out the average operator $\expval{\cdot}$ because the first term contains the second order moments. 
The $2\pi$ factor on the LHS comes from the normalization factor of frequency domain commutation relation (Eq.~\eqref{app:input_normalization}). Eq.~\eqref{app:add_noise_def} can be simplified by replacing $\A{\oo,\out}\pqty{\w}$ with the scattering form
\begin{equation}
    \label{app:aout}
     \A{\oo,\out}\pqty{\w} = \TT_{\oo,\ee}\pqty{\w}\A{\ee,\iin}\pqty{\w} + \hat{J}\pqty{\w} \equiv  \TT_{\oo,\ee}\pqty{\w}\A{\ee,\iin}\pqty{\w} + \hat{J}_\mathrm{ex}\pqty{\w} + \hat{J}_\mathrm{im}\pqty{\w}
\end{equation}
where
\begin{align}
    \hat{J}_\mathrm{ex}\pqty{\w}=\TT_{\oo,\ee'}\pqty{\w}\Ad{\ee,\iin}\pqty{\w}+\TT_{\oo,\oo}\pqty{\w}\A{\oo,\iin}\pqty{\w}+\TT_{\oo,\oo'}\pqty{\w}\Ad{\oo,\iin}\pqty{\w}
\end{align}
and
\begin{equation}
    \label{app:implicit_noise}
    \hat{J}_\mathrm{im}\pqty{\w}=\sum_i \pqty{\TT_{\oo, i} \A{i,\iin} + \TT_{\oo, i'} \Ad{i,\iin} }\ .
\end{equation}
%which satisfy
%\begin{equation}
%    \expval{\A{\ee,\iin}\pqty{\w}\hat{J}^\dagger\pqty{\w}}=\expval{\hat{J}^\dagger\pqty{\w}\A{\ee,\iin}\pqty{\w}}=\expval{\Ad{\ee,\iin}\pqty{\w}\hat{J}\pqty{\w}}=\expval{\hat{J}\pqty{\w}\Ad{\ee,\iin}\pqty{\w}}=0 \ ,
%\end{equation}
%because the inputs at optical and electrical cavities are not correlated and the average is over the coherent state. 
In the above equations, $\hat{J}\pqty{\w} \equiv \hat{J}_\mathrm{ex}\pqty{\w} + \hat{J}_\mathrm{im}\pqty{\w}$ contains all the unwanted modes being mixed into the output, among which $\hat{J}_\mathrm{ex}\pqty{\w}$ contains all external inputs in Eq.~\eqref{app:input_output_vec} (see the last paragraph of Sec.~\ref{app:eom}) and $\hat{J}_\mathrm{im}$ has all the other terms, i.e., the incident noises and the sideband couplings (Eq.~\eqref{app:sidebands_input_coupling}), meaning the index $i$ now includes the frequency shifts and goes through all the inputs on the right-hand side of Eq.~\eqref{app:sidebands_input_coupling} except for $\A{\ee,\iin}\pqty{\w}$ and $\Ad{\ee,\iin}\pqty{\w}$. The subscripts ``ex" and ``im" stand for ``explicit" and ``implicit" since the terms in $\hat{J}_\mathrm{im}$ all satisfy $\expval{\A{i,\iin}}=\expval{\Ad{i,\iin}}=0$ and are usually omitted when presenting the equation of motion Eq.~\eqref{app:heisenberg_eom}. Before proceeding, it is worth mentioning that, for operators which are zero, e.g. $\A{\mm, \iin}\pqty{\w}$ where $\w_\mm < 0$, we can set the corresponding transfer matrix element to zero while keeping the form of Eq.~\eqref{app:implicit_noise}. To avoid any confusion, we emphasize that the operators $\hat{a}\pqty{\w}^\dagger$ and $\hat{a}^\dagger\pqty{\w}$ are different ( see Eq.~\eqref{app:dagger_defination} for definitions).
The simplified RHS of Eq.~\eqref{app:add_noise_def} is given by
\begin{subequations}
\begin{align}
    \mathrm{RHS}&=\frac{1}{2}\expval{\acomm{\hat{J}\pqty{\w}}{\hat{J}\pqty{\w'}^\dagger}} + \TT_{\oo,\ee}\pqty{\w}\TT^*_{\oo,\ee}\pqty{\w'}\bqty{\expval{\acomm{\A{\ee,\iin}\pqty{\w}}{\A{\ee,\iin}\pqty{\w'}^\dagger}}/2- \Big\langle\A{\ee,\iin}\pqty{\w}\Big\rangle\Big\langle\A{\ee,\iin}\pqty{\w'}^\dagger\Big\rangle} \label{app:add_noise_eq_1}\\
    &\quad+\frac{\TT_{\oo,\ee}\pqty{\w}\TT^*_{\oo,\ee'}\pqty{\w'}}{2}\expval{\acomm{\A{\ee,\iin}\pqty{\w}}{\A{\ee,\iin}\pqty{-\w'}}}+\frac{\TT^*_{\oo',\ee'}\pqty{\w'}\TT_{\oo,\ee'}\pqty{\w}}{2}\expval{\acomm{\Ad{\ee,\iin}\pqty{\w}}{\Ad{\ee,\iin}\pqty{-\w'}}} \label{app:add_noise_eq_2}\\
    &\quad -\TT_{\oo,\ee}\pqty{\w}\TT^*_{\oo,\ee'}\pqty{\w'}\expval{\A{\ee,\iin}\pqty{\w}}\expval{\A{\ee,\iin}\pqty{-\w'}} -\TT^*_{\oo',\ee'}\pqty{\w'}\TT_{\oo,\ee'}\pqty{\w}\expval{\Ad{\ee,\iin}\pqty{\w}}\expval{\Ad{\ee,\iin}\pqty{-\w'}} \label{app:add_noise_eq_3}\ ,
\end{align}
\end{subequations}
%\begin{subequations}
%\begin{align}
%    \mathrm{RHS}&=\frac{1}{2}\expval{\acomm{\hat{J}\pqty{\w}}{\hat{J}\pqty{\w'}^\dagger}} + \TT_{\oo,\ee}\pqty{\w}\TT^*_{\oo,\ee}\pqty{\w'}\bqty{\expval{\acomm{\A{\ee,\iin}\pqty{\w}}{\A{\ee,\iin}\pqty{\w'}^\dagger}}/2- \Big\langle\A{\ee,\iin}\pqty{\w}\Big\rangle\Big\langle\A{\ee,\iin}\pqty{\w'}^\dagger\Big\rangle} \label{app:add_noise_eq_1}\\
%    &\quad+\frac{\TT_{\oo,\ee}\pqty{\w}}{2}\expval{\acomm{\A{\ee,\iin}\pqty{\w}}{\hat{J}\pqty{\w'}^\dagger}}+\frac{\TT^*_{\oo,\ee}\pqty{\w}}{2}\expval{\acomm{\hat{J}\pqty{\w}}{\A{\ee,\iin}\pqty{\w'}^\dagger}}\\
%\end{align}
%\end{subequations}
where we assume $\expval{\A{\oo,\iin}\pqty{\w}}=0$ and all the inputs are not correlated with each other. Because the average is over a coherent state, lines~\eqref{app:add_noise_eq_2} and~\eqref{app:add_noise_eq_3} add up to $0$. The second term in line~\eqref{app:add_noise_eq_1} equals $\pi\eta^2\pqty{\w}\delta\pqty{\w-\w'}$, and sets a lower bound to $S_\mathrm{t}\pqty{\w}\ge \frac{1}{2}$, which is the quantum limit of the transduction noise and represents the vacuum fluctuation. Finally, the first term in line~\eqref{app:add_noise_eq_1} gives the expression of added noise
\begin{equation}
    \label{app:noise_S}
    2\pi\eta^2\pqty{\w}S\pqty{\w}\delta\pqty{\w-\w'} = \frac{1}{2}\expval{\acomm{\hat{J}\pqty{\w}}{\hat{J}\pqty{\w'}^\dagger}} \ .
\end{equation}

The RHS of Eq.~\eqref{app:noise_S} can be explicitly written out as
\begin{subequations}
    \label{app:acomm_J}
\begin{align}
    \frac{1}{2}\expval{\acomm{\hat{J}\pqty{\w}}{\hat{J}\pqty{\w'}^\dagger}} &= \frac{1}{2} \abs{\TT_{\oo,\ee'}\pqty{\w}}^2\expval{\acomm{\A{\ee,\iin}\pqty{\w}^\dagger}{\A{\ee,\iin}\pqty{\w'}}}  \\
    &+ \frac{1}{2} \abs{\TT_{\oo,\oo}\pqty{\w}}^2\expval{\acomm{\A{\oo,\iin}\pqty{\w}}{\A{\oo,\iin}\pqty{\w'}^\dagger}} + \frac{1}{2} \abs{\TT_{\oo,\oo'}\pqty{\w}}^2\expval{\acomm{\A{\oo,\iin}\pqty{\w}^\dagger}{\A{\oo,\iin}\pqty{\w'}}} \label{app:reflection_error} \\
    &+ \frac{1}{2}\sum_i\Bqty{ \abs{\TT_{\oo,i}\pqty{\w}}^2\expval{\acomm{\A{\oo,\iin}\pqty{\w}}{\pqty{\A{\oo,\iin}\pqty{\w'}}^\dagger}} +  \abs{\TT_{\oo,i'}\pqty{\w}}^2\expval{\acomm{\pqty{\A{i,\iin}\pqty{\w}}^\dagger}{\A{i,\iin}\pqty{\w'}}}} \label{app:implicit
    _error}\ .
\end{align}
\end{subequations}
Because the input of the electrical cavity is a coherent state with a single photon and all other inputs are the vacuum state, the added noise becomes
\begin{equation}
    \label{app:added_noise_expression}
    \eta^2\pqty{\w}S\pqty{\w}=\frac{3}{2}\abs{\TT_{\oo,\ee'}\pqty{\w}}^2 + \frac{1}{2}\abs{\TT_{\oo,\oo}\pqty{\w}}^2+\frac{1}{2}\abs{\TT_{\oo,\oo'}\pqty{\w}}^2 +\frac{1}{2}\sum_i \pqty{\abs{\TT_{\oo,i}\pqty{\w}}^2+\abs{\TT_{\oo,i'}\pqty{\w}}^2 } \ . 
\end{equation}
A lower bound of $S\pqty{\w}$ can be derived by allowing the input at the optical cavity to be a squeezed vacuum state~\cite{Lau2020-si}
\begin{subequations}
    \label{app:added_noise_inequality}
\begin{align}
    \eta^2\pqty{\w}S\pqty{\w}&\ge\frac{3}{2}\abs{\TT_{\oo,\ee'}\pqty{\w}}^2 + \frac{1}{2}\abs{\abs{\TT_{\oo,\oo}\pqty{\w}}^2-\abs{\TT_{\oo,\oo'}\pqty{\w}}^2} + \frac{1}{2}\sum_i \abs{\abs{\TT_{\oo,i}\pqty{\w}}^2-\abs{\TT_{\oo,i'}\pqty{\w}}^2 } \label{app:s_line1}\\
    &\ge\frac{3}{2}\abs{\TT_{\oo,\ee'}\pqty{\w}}^2 + \frac{1}{2}\abs{1-\abs{\TT_{\oo,\ee}\pqty{\w}}^2+\abs{\TT_{\oo,\ee'}\pqty{\w}}^2} \label{app:s_line2}\ ,
\end{align}
\end{subequations}
where we make use of
\begin{equation}
    \sum_i\abs{\abs{\TT_{\oo, i}}^2 - \abs{\TT_{\oo, i'}}^2} \ge \abs{\sum_i \abs{\TT_{\oo,i}}^2 - \abs{\TT_{\oo,i'}}^2}
\end{equation}
and
\begin{equation}
    \label{app:preservation_comm}
    1=\abs{\TT_{\oo,\ee}\pqty{\w}}^2-\abs{\TT_{\oo,\ee'}\pqty{\w}}^2 +\abs{\TT_{\oo,\oo}\pqty{\w}}^2-\abs{\TT_{\oo,\oo'}\pqty{\w}}^2 +\sum_i\abs{\TT_{\oo,i}\pqty{\w}}^2-\abs{\TT_{\oo,i'}\pqty{\w}}^2
\end{equation}
from line~\eqref{app:s_line1} to~\eqref{app:s_line2}. 
Eq.~\eqref{app:preservation_comm} comes from the preservation of commutation $\comm{\A{\oo,\out}\pqty{\w}}{\A{\oo,\out}\pqty{\w}^\dagger}=2\pi$.
A compact form of the lower bound is given by
\begin{equation}
\label{app:S_lower_bound}
S\pqty{\w} \ge \frac{3}{2}R^2\pqty{\w} + \abs{\frac{1-\eta^2\pqty{\w}}{2\eta^2\pqty{\w}}+\frac{R^2\pqty{\w}}{2}}
\end{equation}
with $R^2\pqty{\w}=\abs{\TT_{\oo,\ee'}\pqty{\w}}^2/\eta^2\pqty{\w}$. The equality can only be achieved by squeezing every input which has non-zero transmission coefficient to the output.

\section{Numerical values}
\label{app:num_vals}
Table~\ref{tab:num_vals} provides all the numerical values used in the paper, which are chosen according to a real experiment~\cite{higginbotham2018harnessing}. The coupling strength to the internal loss ports of the EM cavities are calculated by $\kappa_{i,\mathrm{int}}=\kappa_i-\kappa_{i,\mathrm{ex}}$. We always assume that the mechanical cavity has a single input and output, i.e.,  $\kappa_{\mm,\mathrm{ex}}=\kappa_\mm$.
\begin{table}[h]
    \centering
    \begin{tabular}{|c|c|c|c|c|c|}
    \hline
    parameter & value & parameter & value & parameter & value\\
    \hline
    $\Delta_\oo/2\pi$& 1.11 \ MHz & $\Delta_\ee/2\pi$& 1.47 \ MHz & $\omega_\mm/2\pi$& 1.4732 \ MHz\\
    $\kappa_\oo/2\pi$ & 2.1 \ MHz & $\kappa_\ee/2\pi$&2.5\ MHz & $\kappa_\mm/2\pi$& 11 \ Hz \\
    $\kappa_{\oo,\mathrm{ex}}/2\pi$ & 1.1 \ MHz & $\kappa_{\ee,\mathrm{ex}}/2\pi$& 2.3\ MHz& $\kappa_{\mm,\mathrm{ex}}/2\pi$& 11 Hz\\
    $g_\oo/2\pi$&6.6 \ Hz&$g_\ee/2\pi$& 3.8 \ Hz&   &  \\
    \hline
    \end{tabular}
    \caption{Numerical values of the parameters used in simulations.}
    \label{tab:num_vals}
\end{table}

The driven signal amplitudes $\Omega_i$ are also chosen according to Ref.~\cite{higginbotham2018harnessing}. In the reference, the reported experimental $\Gamma_\oo$ and $\Gamma_\ee$ values range from $2\pi \times  95 \mathrm{Hz}$ to $2\pi \times  725 \mathrm{Hz}$. Because $\Gamma_i = 4g_i^2 n_i / \kappa_i$, the steady state photon number $n_i$ inside each EM cavity needs to be at least on the order of $10^{6}$ to achieve the reported $\Gamma_i$. We know $n_i=\abs{\alpha_i}^2$, and for the constant driving case, $\alpha_i$ is given by $2\Omega_i/\pqty{i\kappa_i-2\Delta_i}$ (see Eq.~\eqref{app:steady_const}). So using the parameter values provided by table~\ref{tab:num_vals}, we estimate that a minimum driving strength on the order of $1000 \mathrm{MHz}$ is needed to achieve the reported $\Gamma_i$. Therefore, to make sure our choice of $\Omega_i$ is attainable, we chose most their numerical values below $1000 \mathrm{MHz}$.
\section{Results with more sidebands}
\label{app:more_sidebands}
In this section, we report the complementary figure to Fig. 1 and 2 in the main text. We numerically calculate the transduction efficiency and added noise of both the symmetric and realistic quantum transducers with the same parameters but more sidebands. We observe that the curves converge at $N=2$; namely, for more than four sidebands ($N > 2$) the resulting curves remain identical to the $N=2$ result. The numerical results confirm the proof presented in Sec.~\ref{sec:proof} of the main text. Our findings are shown in Fig.~\ref{fig:more_sidebands} and can be reproduced by the companion Mathematica notebook~\cite{notebook}.
\begin{figure}[ht]
     \centering
     \subfigure[]{
        \includegraphics[width=0.4\textwidth]{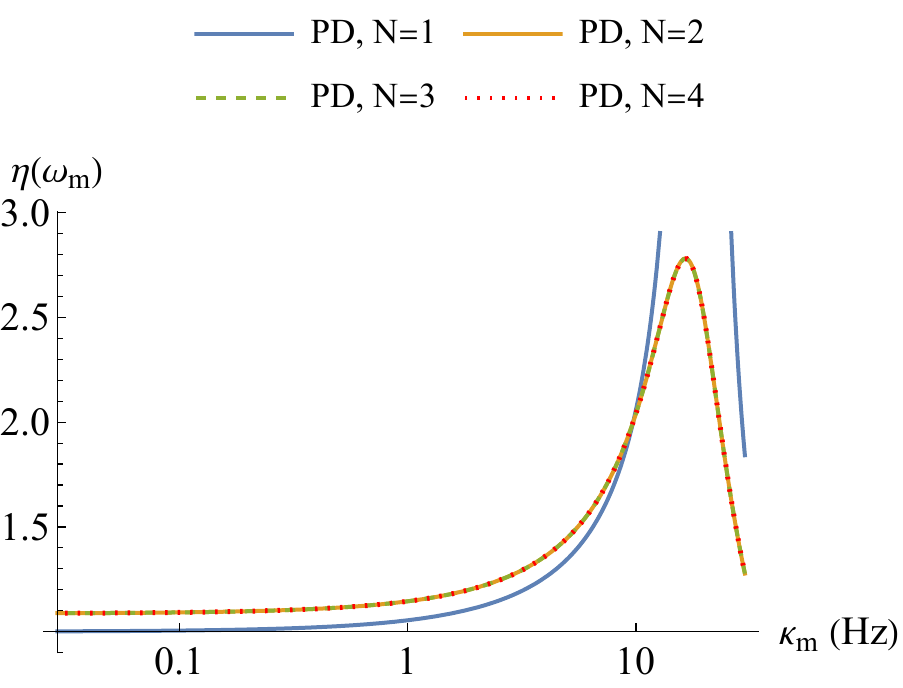}
        }
     \subfigure[]{
        \includegraphics[width=0.4\textwidth]{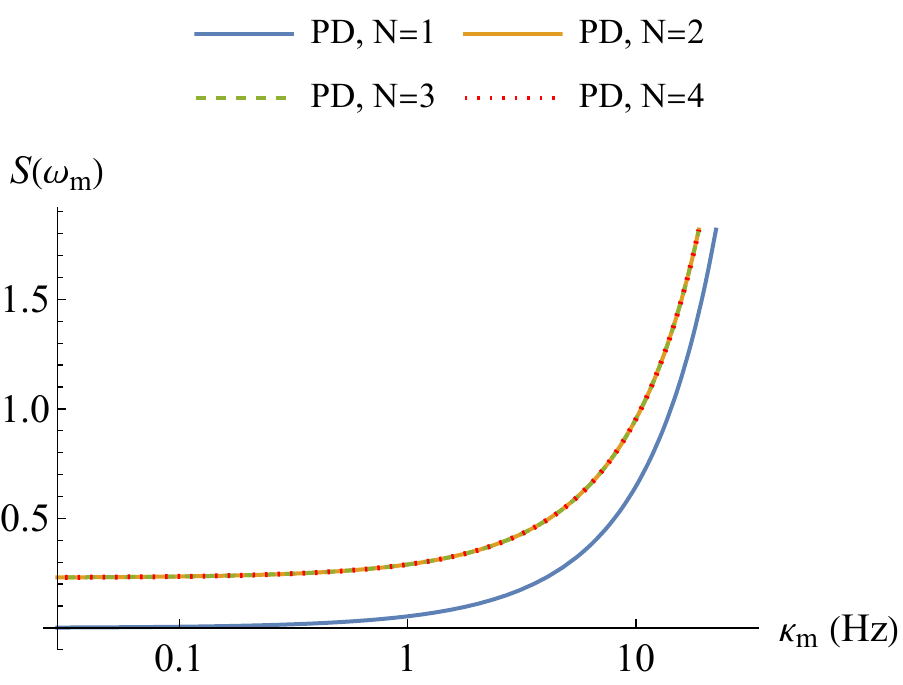}
        }
    \\
    \subfigure[]{\includegraphics[width=0.4\textwidth]{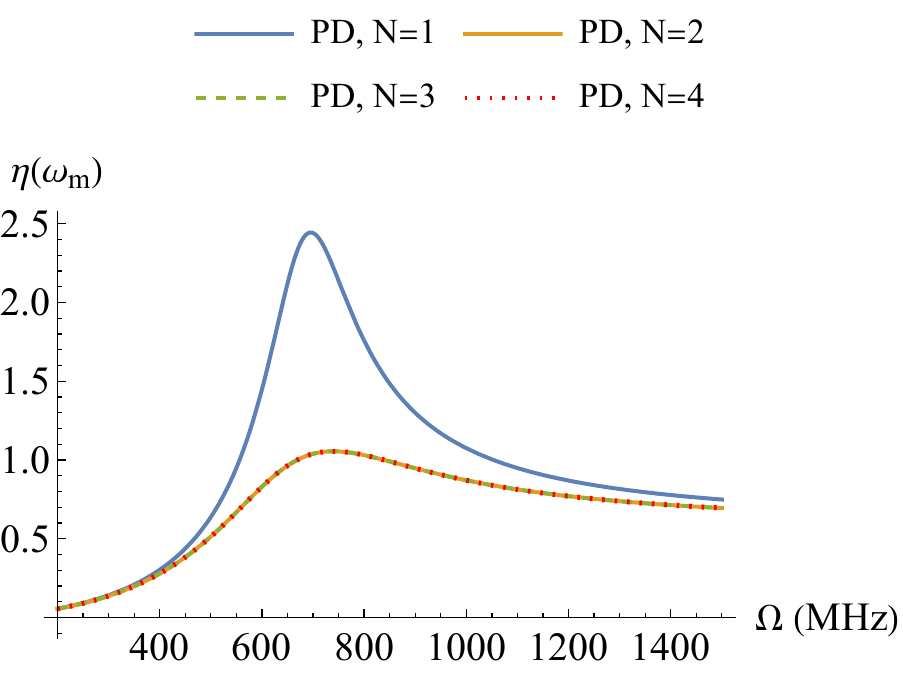}}
    \subfigure[]{\includegraphics[width=0.4\textwidth]{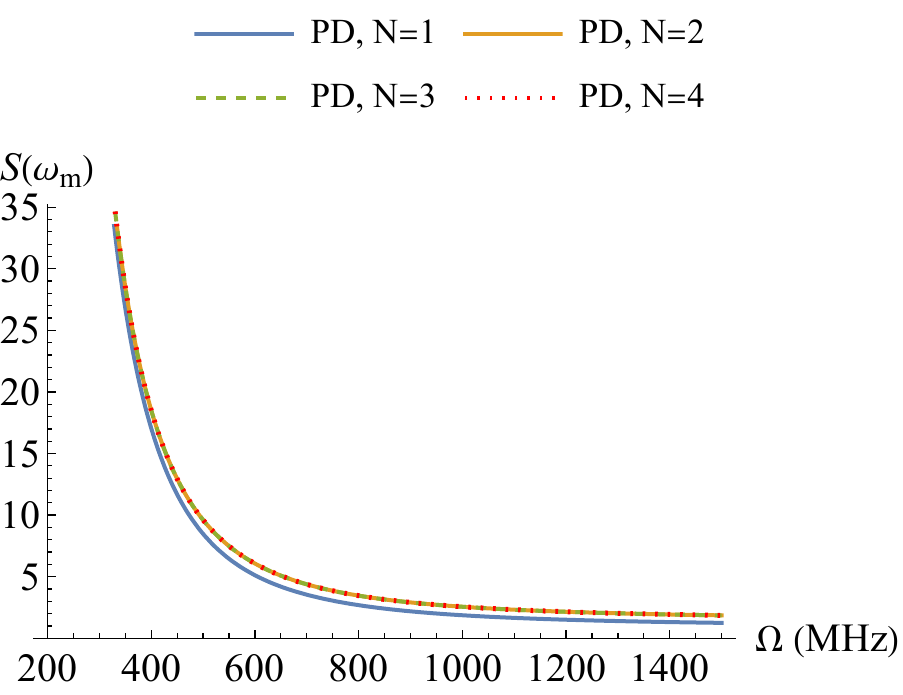}}
    \caption{Complementary figure to Fig. 1 and 2 in the main text. (a), (b): Transduction efficiency $\eta\pqty{\w_\mm}$ and added noise $S\pqty{\w_\mm}$ v.s. $\kappa_\mm$ of a symmetric transducer, complement Fig. 1 of the main text; (c), (d): Transduction efficiency $\eta\pqty{\w_\mm}$ and added noise $S\pqty{\w_\mm}$ v.s. driving amplitude $\Omega$ of a realistic optomechanical quantum transducer, , complement Fig. 2 of the main text.}
    \label{fig:more_sidebands}
\end{figure}
\FloatBarrier
\twocolumngrid
\bibliographystyle{apsrev4-2}
\bibliography{optomech.bib}
\end{document}